\documentclass[journal]{IEEEtran}
\usepackage{graphicx}

\usepackage{caption}

\usepackage{amsthm}
\usepackage{cite}

 \usepackage{multirow} 

\usepackage[utf8]{inputenc}

\ifCLASSINFOpdf
\else
\fi
\usepackage{amsmath}
\usepackage{xcolor}

\usepackage{algorithmic}
\usepackage{algorithm}

\usepackage{bm}
\usepackage[caption=false,font=footnotesize]{subfig}
\usepackage{graphicx}
\usepackage{array}
\usepackage{fixltx2e}

\usepackage{amssymb}

\begin{document}
%

\title{Joint Channel Estimation, Detection, and \\Resource Allocation for OTFS-RSMA
}
%
%
%

\author{Chaedam~Son and Si-Hyeon Lee,~\IEEEmembership{Senior Member,~IEEE}
\thanks{} 
\thanks{The authors  are with the School of Electrical Engineering, Korea Advanced Institute of Science and Technology (KAIST), Daejeon 34141, South Korea (e-mail: \{scd5929,sihyeon\}@kaist.ac.kr).  
(Corresponding author: Si-Hyeon Lee.)}
}

\allowdisplaybreaks[2]

\newcommand{\cd}[1]{\textcolor{blue}{{#1}}}
\newcommand{\sh}[1]{\textcolor{red}{{#1}}}
\newcommand{\re}[1]{\textcolor{green}{{#1}}}


\maketitle
\newtheorem{theorem}{Theorem}
\newtheorem{lemma}{Lemma}
\newtheorem{remark}{Remark}
\begin{abstract}
Orthogonal time frequency space (OTFS) modulation is well suited for high-mobility communications, but channel estimation and efficient
multi-user transmission remain challenging.
In this paper, we develop an OTFS-rate splitting multiple access (RSMA)
framework that jointly considers pilot transmission, channel estimation,
message detection, and resource allocation under imperfect channel state
information (CSI).
We consider guard-based and superimposed transmission for the common and
private messages to account for the tradeoff between channel-estimation
reliability and spectral efficiency (SE).
We derive a linear minimum mean square error (LMMSE) channel estimator,
error-aware message-passing detectors, and tractable surrogate rate
expressions that account for channel-estimation errors.
Based on these expressions, we formulate a joint resource-allocation
problem and develop an SCA-based two-dimensional search algorithm together
with a low-complexity alternative that exploits the analytical structure
of the resource allocation.
Numerical results demonstrate reliable channel-estimation and BER
performance and show that configurations with guard-based common-message
transmission achieve favorable sum-SE performance, while the preferred
private-message structure depends on the operating regime.
The low-complexity algorithm achieves near-SCA performance with substantially
reduced computation time, and OTFS-RSMA remains competitive with OTFS-NOMA
while requiring only a single SIC stage.
\end{abstract}

\begin{IEEEkeywords}
OTFS, RSMA, pilot transmission, channel estimation, message passing
\end{IEEEkeywords}

%
\IEEEpeerreviewmaketitle

\section{Introduction}
\IEEEPARstart{A}{s} sixth-generation (6G) wireless communications evolve, high-mobility applications such as autonomous driving, unmanned aerial vehicles (UAVs), and low Earth orbit (LEO) satellite communications are becoming increasingly important. The resulting severe Doppler effects give rise to doubly selective channels, where conventional cyclic prefix
(CP)-aided orthogonal frequency division multiplexing (OFDM) suffers from inter-carrier interference (ICI). To address this challenge, orthogonal time frequency space (OTFS) modulation has emerged as a promising
waveform~\cite{OTFS_review1,OTFS_review2}. By mapping information symbols onto the delay-Doppler domain, OTFS exploits the sparse channel structure
to provide a nearly time-invariant effective channel, reduced pilot
overhead, and robust data detection~\cite{OTFS-NOMA}.

However, in multi-user OTFS systems, orthogonal resource allocation
reduces the delay--Doppler resources available to each user, potentially
limiting spectral efficiency~\cite{coexistence}.
This has motivated non-orthogonal resource sharing, such as
non-orthogonal multiple access (NOMA), which employs superposition and
successive interference cancellation (SIC)~\cite{OTFS-NOMA}.
While enabling efficient resource sharing, NOMA relies on the SIC decoding
order and may require multiple SIC stages.
More recently, rate-splitting multiple access (RSMA) has been introduced
into OTFS systems~\cite{OTFS-RSMA1,OTFS-RSMA2,OTFS-RSMA3, OTFS-RSMA_LEO}.
By splitting user messages into common and private parts, RSMA flexibly
manages multi-user interference while reducing the SIC burden.
Existing studies have investigated OTFS-RSMA for sum-rate
maximization~\cite{OTFS-RSMA1}, heterogeneous user
mobility~\cite{OTFS-RSMA2}, deep learning-based
detection~\cite{OTFS-RSMA3}, and LEO
satellite communications~\cite{OTFS-RSMA_LEO}.  
However, these studies generally assume that the required channel state
information is available and do not explicitly address channel estimation in
the OTFS-RSMA framework. 
Explicitly accounting for channel estimation is particularly important in
OTFS-RSMA because the pilot can experience interference from both the common and
private messages depending on their transmission
structures, while estimation errors can propagate through SIC. 
Existing studies on OTFS channel estimation and pilot
design~\cite{OTFS_channel_est1,OTFS_channel_est2,OTFS_superimposed0,
OTFS_superimposed1,OTFS_superimposed2,OTFS_superimposed3}
have mainly considered guard-based and superimposed pilot schemes.
Guard-based schemes protect the pilot from data interference, enabling
reliable channel estimation at the cost of guard
overhead~\cite{OTFS_channel_est1,OTFS_channel_est2}.
In contrast, superimposed schemes improve spectral efficiency (SE) by
allowing pilots and data to share the same resources, but data interference
can degrade channel estimation~\cite{OTFS_superimposed0,
OTFS_superimposed1,OTFS_superimposed2,OTFS_superimposed3}.
These approaches thus exhibit a fundamental tradeoff between channel
estimation reliability and SE. {In OTFS-RSMA, this tradeoff further interacts with the transmission and decoding of the common and private messages, motivating a joint system design that explicitly accounts for channel estimation.}

In this paper, we develop an OTFS-RSMA framework that jointly considers
pilot transmission, channel estimation, message detection, and resource
allocation under imperfect CSI. 
We consider guard-based and superimposed transmission for both the common
and private messages, allowing their effects on channel estimation and
data transmission to be incorporated into the system design.
We develop channel-estimation and SIC-based message-detection methods and
characterize the resulting channel-estimation errors. Based on this error characterization, we formulate a joint resource-allocation
problem over the pilot, common-data, and private-message powers and the
common-rate shares among users. 
The resulting formulation explicitly captures the coupling between resource
allocation and channel-estimation errors. 
We further investigate the resulting guard-based and superimposed
configurations and compare their sum-SE performance.
To the best of our knowledge, such joint consideration of pilot transmission, channel
estimation, message detection, and resource allocation has not been
addressed in existing OTFS-RSMA studies.

{The main contributions   are summarized as follows.
\begin{itemize}
    \item \textbf{Integrated OTFS-RSMA framework:}
    We develop an OTFS-RSMA framework that jointly considers pilot
    transmission, channel estimation, message detection, and resource
    allocation under imperfect CSI.
    Guard-based and superimposed transmission are considered for both
    the common and private messages, enabling their effects on channel
    estimation and data transmission to be incorporated into the overall
    system design.
    \item \textbf{Channel estimation and error-aware message detection:}
    We derive closed-form LMMSE channel estimates and the corresponding
    mean square error (MSE) expressions in the presence of data interference.
    Based on the resulting channel-error characterization, we develop
    message-passing detectors for the common and private messages that
    explicitly account for channel-estimation errors and residual
    interference through SIC.
    \item \textbf{Resource allocation:}
    We derive tractable surrogate rate expressions that account for
    channel-estimation errors and formulate a joint resource-allocation
    problem over the pilot, common-data, and private-message powers and
    the common-rate allocation.
    We develop a successive convex approximation (SCA)-based 2D search algorithm and a low-complexity
    alternative based on a conservative private-rate approximation.
    For the latter, we further characterize the resource-allocation
    structure and derive closed-form common-rate and private-power
    allocation rules.
    \item \textbf{Numerical evaluation:}
    Numerical results demonstrate the channel-estimation and  bit error rate (BER)
    performance of the proposed framework and reveal the sum-SE tradeoff
    among the considered guard-based and superimposed configurations.
    In particular, configurations employing guard-based transmission for
    the common message achieve favorable sum-SE performance, while the
    preferred private-message structure depends on the operating regime.
    The low-complexity resource-allocation algorithm achieves near-SCA
    performance with substantially reduced computation time, and
    OTFS-RSMA remains competitive with OTFS-NOMA while requiring only a
    single SIC stage.
\end{itemize}}
\textit{Notation}: Scalars are regular italic, whereas vectors and matrices are boldface; their dimensions are specified at first definition (e.g., $x$, $\pmb{x}$, $\pmb{X}$). The $(i,j)$-th element of a matrix $\pmb{X}$ is $\pmb{X}(i,j)$, and the $i$-th element of a vector $\pmb{a}$ is $\pmb{a}(i)$. For a matrix $\pmb{A}$, $\pmb{A}^T$ and $\pmb{A}^H$ denote the transpose and Hermitian transpose, respectively, and $\text{Tr}(\pmb{A})$ denotes the trace. For a scalar $x$, $|x|$ is the absolute value, and $\|\cdot\|$ denotes the Euclidean norm for vectors or the Frobenius norm for matrices. The vectorization operator $\text{vec}(\pmb{A})$ stacks the columns of $\pmb{A}$ into a single vector. The circularly symmetric complex Gaussian distribution with mean $\pmb{\mu}$ and covariance $\pmb{\Sigma}$ is denoted by $\mathcal{\mathcal{CN}}(\pmb{\mu},\pmb{\Sigma})$. The expectation and variance operators are denoted by $\mathbb{E}[\cdot]$ and $\mathbb{V}[\cdot]$, respectively. 
\section{System Model}
We consider a downlink multi-user RSMA system based on OTFS modulation, where
a single-antenna base station (BS) serves \(U\) single-antenna users indexed by
\(\mathcal{U}\triangleq\{1,\ldots,U\}\).
Each OTFS frame has duration \(T_f = NT\) and bandwidth \(B = M\Delta f\),
where \(N\) and \(M\) denote the numbers of time and frequency samples,
respectively, and \(T\) and \(\Delta f=1/T\) denote the sampling intervals
along the time and frequency axes, respectively.
We assume that both the BS and the users have statistical CSI~\cite{OTFS-statisticalCSI}.
The BS aims to deliver an independent message to each user. Following the RSMA framework, each user’s message is split into a common part and a private part. The common parts of all users are combined into a single common message that is decoded by all users, whereas each private part is encoded into a private message intended only for its corresponding user. Each user first decodes the common message and removes it via SIC, and then decodes its own private message \cite{RSMA_power}. 

In OTFS systems, pilot transmission schemes can be broadly classified into guard-based and superimposed approaches. The guard-based pilot scheme employs guard symbols to eliminate interference between pilot and data symbols, thereby enabling accurate channel estimation at the expense of reduced SE \cite{OTFS_channel_est1}. In contrast, the superimposed pilot scheme improves resource utilization by removing guard symbols; however, it introduces pilot--data interference and typically requires SIC for reliable detection \cite{OTFS_superimposed1,OTFS_superimposed2,OTFS_superimposed3}.
{We consider the possible pilot transmission configurations for OTFS-RSMA
by allowing the common and private messages to independently adopt either a guard-based (G) or superimposed (S) structure.
This results in four configurations, denoted by GG, GS, SG, and SS, where the first and second letters indicate the structures of the common and private messages, respectively.
For example, GS employs a guard-based common message and superimposed private messages, whereas SG employs a superimposed common message and guard-based private messages; GG and SS apply the same guard-based and superimposed structures, respectively, to both message types.}

In the following, we focus on the asymmetric GS configuration, which employs guard-based transmission for the common message and superimposed transmission for the private messages. Based on this configuration,
we present the system model and develop the corresponding channel-estimation,
detection, and resource-allocation methods. 
The extensions to the other three configurations are briefly discussed
later.
In the GS configuration, the pilot is embedded in the common message and
protected from the common data by a guard region, while the private messages
are transmitted over the entire delay-Doppler (DD) grid without guard regions, including
the pilot region.
Each user first estimates its channel using the pilot, decodes the common
message, removes it via SIC, and then decodes its own private message.
The detailed channel and signal models are provided in the following
subsections.


\subsection{Channel Model}
The delay--Doppler channel $h_u(\tau,\nu)$ of the $u$-th user is given as \(
     h_u(\tau, \nu)= \sum_{q=1}^{Q_u} h_{u,q}\,\delta\big(\tau - \tau_{u,q}\big)\,\delta\big(\nu - \nu_{u,q}\big),\)
where $Q_u$ is the number of propagation paths of user $u$ and $h_{u,q}$, $\tau_{u,q}$, and $\nu_{u,q}$ are the complex channel gain, delay, and Doppler shift for the $q$-th path of user $u$, respectively. For analytical tractability, we assume that $Q_u$ is identical for all $u$, i.e., $Q_1 = Q_2 = \cdots = Q_U=Q $.
The delay and Doppler shifts of the $q$-th path of user $u$ are given by
$\tau_{u,q} = \frac{l_{u,q}}{M \Delta f}$ and $ \nu_{u,q} = \frac{k_{u,q}}{N T}$, respectively, where $l_{u,q}$ and $k_{u,q}$ denote the delay and Doppler indices and are assumed to be bounded \cite{OTFS_channel_est1}. 
We denote the maximum delay and Doppler indices over all users and paths by $l_{\text{max}} \triangleq \max_{u,q} l_{u,q}$ and $k_{\text{max}} \triangleq \max_{u,q} |k_{u,q}|$, respectively. Moreover, we assume that $M$ and $N$ are sufficiently large so that the fractional delay and Doppler components can be neglected, i.e., $l_{u,q}$ and $k_{u,q}$ are assumed to be integers \cite{OTFS_superimposed4,OTFS_superimposed0}. Furthermore, we assume that the delay--Doppler domain channel gains $h_{u,q}$ are uncorrelated across paths and users, and follow a complex Gaussian distribution\cite{OTFS_superimposed1}. {Specifically, the channel gain vector of user $u$ is given by
$\pmb{h}_u = [h_{u,1}, \ldots, h_{u,Q}]^{\mathsf{T}}
       \sim \mathcal{\mathcal{CN}}\big(\pmb{0}_Q, \pmb{K}_{\pmb{h}_u, \pmb{h}_u}\big),
$
with covariance matrix {$\pmb{K}_{\pmb{h}_u, \pmb{h}_u}= \text{diag}\big(\sigma_{u,1}^2, \sigma_{u,2}^2, \ldots, \sigma_{u,Q}^2\big)$}. We define user $u$'s effective channel gain as $\sigma_{u}^2=\sum_q \sigma_{u,q}^2$}.

\subsection{Signal Model}
We consider a single common message and $U$ private messages for the $U$ users in the delay--Doppler domain. Let $x_{\text{c}}[l,k] $ and $ x_{\text{p},u}[l,k] $ denote the common-message symbol and the private-message symbol for user $u$, respectively, at the $(l,k)$-th delay--Doppler index, where $ l \in \{0,\dots,M-1\} $ and $ k \in \{0,\dots,N-1\} $. The common message has an asymmetric structure including pilot, guard, and data symbols, following the guard-based pilot structure \cite{OTFS_channel_est1}, whereas each private message consists only of data symbols. {Let \(\mathcal{A}_{\text{pilot}}^{\text{DD}}=\{(l_{\text{r}},k_{\text{r}})\}\), \(\mathcal{A}_{\text{guard}}^{\text{DD}}=\{(l,k): l_{\text{r}} - l_{\text{max}} \le l \le l_{\text{r}} + l_{\text{max}}, k_{\text{r}} - 2k_{\text{max}} \le k \le k_{\text{r}} + 2k_{\text{max}}\}\setminus \mathcal{A}_{\text{pilot}}^{\text{DD}}\), and \(\mathcal{A}_{\text{data}}^{\text{DD}}=\{(l,k):  0\leq l \leq M-1, 0\leq k \leq N-1\}\setminus \mathcal{A}_{\text{pilot}}^{\text{DD}} \setminus \mathcal{A}_{\text{guard}}^{\text{DD}} \)} {for some  $0\leq l_{\text{r}} \leq M-1$ and $0\leq k_{\text{r}} \leq N-1$} denote the sets of delay--Doppler indices corresponding to pilot, guard, and data positions of the common message, respectively. The common message symbol at the $(l,k)$-th delay-Doppler index is constructed as
\begin{align}
    x_{\text{c}}[l,k] =
    \begin{cases}
        x_{\text{c},\text{r}}, 
        & (l,k)\in\mathcal{A}_{\text{pilot}}^{\text{DD}}, \\[2pt]
        0, 
        &(l,k)\in\mathcal{A}_{\text{guard}}^{\text{DD}}, \\[2pt]
        x_{\text{c,d}}[l,k], 
        & (l,k)\in\mathcal{A}_{\text{data}}^{\text{DD}},
    \end{cases}
\end{align}
where $x_{\text{c,r}}$ {is the pilot (reference)} symbol~\cite{OTFS_channel_est1} and the symbols $x_{\text{c,d}}[l,k]$ are the common data symbols. Each data symbol is drawn from a finite constellation set $\mathcal{S}$ of cardinality $S$, i.e., $x_{\text{c,d}}[l,k]\in\mathcal{S}$ and $x_{\text{p},u}[l,k]\in\mathcal{S}$. Let $\mathcal{S}_{l,k}$ denote the constellation set from which $x_{\text{c}}[l,k]$ is drawn, i.e.,  
\[
\mathcal{S}_{l,k}=
\begin{cases}
\{x_{\text{c,r}}\}, & (l,k)\in\mathcal{A}_{\text{pilot}}^{\text{DD}},\\
\{0\}, & (l,k)\in\mathcal{A}_{\text{guard}}^{\text{DD}},\\
\mathcal{S}, & (l,k)\in\mathcal{A}_{\text{data}}^{\text{DD}}. 
\end{cases}
\]
We assume that the data symbols are independent, zero-mean, and uncorrelated across delay--Doppler indices and users, with
$\mathbb{E}[x_{\text{c,d}}[l,k]]=0$,
$\mathbb{E}[x_{\text{p},u}[l,k]]=0$,
$\mathbb{E}[x_{\text{c,d}}[l,k]x_{\text{c,d}}^{*}[l',k']]=0$, and
$\mathbb{E}[x_{\text{p},u}[l,k]x_{\text{p},v}^{*}[l',k']]=0$ for $(l,k)\neq(l',k')$ or $u\neq v$. {In addition, let $P_{\text{c,d}}$, $P_{\text{p},u}$ and $P_{\text{c,r}}$ denote the transmit powers of the common data symbols, private-message symbols for user $u$, and the pilot symbol, respectively, i.e., 
$\mathbb{E}[|x_{\text{c,d}}[l,k]|^2]=P_{\text{c,d}}$,
$\mathbb{E}[|x_{\text{p},u}[l,k]|^2]=P_{\text{p},u}$, and $|x_{\text{c,r}}|^2=P_{\text{c,r}}$.} We further define \(P_{\text{p}}=\sum_{u=1}^{U}P_{\text{p},u}\) as the total private-message power and \(N_g=(4k_{\text{max}}+1)(2l_{\text{max}}+1)-1\) as the number of guard symbols.

Accordingly, the total transmit symbol at the $(l,k)$-th delay--Doppler index is given by
\begin{align}
    x[l,k] = x_{\text{c}}[l,k] + \sum_{u=1}^{U} x_{\text{p},u}[l,k].
\end{align}
Let $\pmb{X} \in \mathbb{C}^{M \times N}$ denote the matrix formed by the symbols $\{x[l,k]\}$, where $\pmb{X}(l+1,k+1) = x[l,k]$. Similarly, let $\pmb{X}_{\text{c}}$ and $\pmb{X}_{\text{p},u}$ denote the matrices formed by $\{x_{\text{c}}[l,k]\}$ and $\{x_{\text{p},u}[l,k]\}$, respectively. Then, we have
\begin{align}
    \pmb{X} = \pmb{X}_{\text{c}} + \sum_{u=1}^{U} \pmb{X}_{\text{p},u}.
\end{align}
The delay--Doppler symbols are mapped to the time--frequency domain via the inverse symplectic finite Fourier transform (ISFFT) and transmitted over the wireless channel by a rectangular transmit waveform\cite{OTFS_superimposed1}. After standard OTFS demodulation, the received signal can be represented in the delay--Doppler domain as follows \cite{OTFS_superimposed4}: 
\begin{align}
    \!y_u[l,k]
    \!=\!\sum_{q=1}^{Q}\!h_{u,q}\alpha_{u,q}(l,k)
      x([l-l_{u,q}]_M\!,\![k-k_{u,q}]_N)\!+\! n_u[l,k],
      \notag
\end{align}
where $n_u[l,k]\sim \mathcal{\mathcal{CN}}(0,\sigma_n^2)$ is the additive white Gaussian noise (AWGN) and $ \alpha_{u,q}(l,k) $ denotes the additional phase shift induced by the rectangular waveform, which is given by
\begin{align}
\alpha_{u,q}(l,k)=
\begin{cases}
e^{-j2\pi \frac{k}{N}} z^{\,k_{u,q}\big((l-l_{u,q})\big)_M}, & l<l_{u,q},\\[4pt]
z^{\,k_{u,q}\big((l-l_{u,q})\big)_M}, & l\ge l_{u,q},
\end{cases}
\end{align}
with $ z = e^{j2\pi/(MN)} $.

For analytical convenience, we vectorize the received delay--Doppler symbols into 
\(\pmb{y}_u \in \mathbb{C}^{MN}\). The vectorized input-output relation is given by
\begin{align}
\label{receive signal}
    \pmb{y}_u
    = \pmb{H}_u \left( \pmb{x}_{\text{c}} + \sum_{j=1}^{U} \pmb{x}_{\text{p},j} \right) + \pmb{n}_u,
\end{align}
where \(\pmb{H}_u \in\mathbb{C}^{MN\times MN}\) is the equivalent delay--Doppler-domain channel matrix of user \(u\), given by
\(
\pmb{H}_u
=
\big(\pmb{F}_N\otimes\pmb{G}_{\text{rx}}\big)
\left(
\sum_{q=1}^{Q}
h_{u,q}
\pmb{\Pi}^{l_{u,q}}
\pmb{\Delta}^{k_{u,q}}
\right)
\big(\pmb{F}_N^H\otimes\pmb{G}_{\text{tx}}\big)
\).
Here, \(\pmb{F}_N\) is the \(N\times N\) normalized discrete Fourier transform (DFT) matrix, \(\pmb{G}_{\text{tx}}=\pmb{G}_{\text{rx}}=\pmb{I}_M\) under rectangular transmit and receive pulse shapes, and \(\pmb{\Pi}\) and \(\pmb{\Delta}\) denote the delay-domain cyclic shift matrix and Doppler-domain phase-shift matrix, respectively~\cite{OTFS_superimposed1}. In addition, \(\pmb{x}_{\text{c}}=\operatorname{vec}(\pmb{X}_{\text{c}})\) and \(\pmb{x}_{\text{p},j}=\operatorname{vec}(\pmb{X}_{\text{p},j})\) denote the vectorized common message and private message of user \(j\), respectively, and \(\pmb{n}_u\sim\mathcal{CN}(\pmb{0}_{MN},\sigma_n^2\pmb{I}_{MN})\) is the noise vector with variance \(\sigma_n^2\). We use column-wise vectorization, where the \((l,k)\)-th delay--Doppler index is mapped to the vector index \(\nu(l,k)=l+1+kM{\in\{1,\ldots, MN\}}\), so that \(\pmb{x}_{\text{c}}(\nu(l,k))=x_{\text{c}}[l,k]\). Accordingly, the vector-index sets corresponding to the pilot, guard, and data positions are defined as \(\mathcal{A}_{\chi}=\{\nu(l,k):(l,k)\in\mathcal{A}_{\chi}^{\text{DD}}\}\), where \(\chi\in\{\text{pilot},\text{guard},\text{data}\}\).

\section{Proposed Framework}
In this section, we describe the channel estimation and data detection procedures. The structural asymmetry of the common message--comprising pilot, guard, and common data--along with the interference from private messages during channel estimation, necessitates a tailored processing strategy distinct from conventional OTFS systems. We first present the channel estimation procedure based on the LMMSE estimator \cite{estimation_theory}, and then describe the decoding of the common and private messages using a message-passing algorithm \cite{OTFS_channel_est1}.
\subsection{Channel Estimation}
Upon reception, each user estimates the channel using the pilot symbol embedded in the common message. For channel estimation, we define the observation index set in the delay--Doppler domain as
\(\mathcal{A}_{\text{obs}}^{\text{DD}}
=
\{(l,k): l_{\text{r}}\le l\le l_{\text{r}}+l_{\text{max}},
k_{\text{r}}-k_{\text{max}}\le k\le k_{\text{r}}+k_{\text{max}}\}\),
whose cardinality is \(N_{\text{obs}}=(2k_{\text{max}}+1)(l_{\text{max}}+1)\).
Using the vector-index mapping \(\nu(l,k)\), the corresponding vector-index set is defined as
\(\mathcal{A}_{\text{obs}}
=
\{\nu(l,k):(l,k)\in\mathcal{A}_{\text{obs}}^{\text{DD}}\}\).
The observation vector \(\bar{\pmb{y}}_u\in\mathbb{C}^{N_{\text{obs}}}\) is then formed by collecting the entries of \(\pmb{y}_u\) indexed by \(\mathcal{A}_{\text{obs}}\), and can be expressed as 
\begin{align}
\label{estimation_equation}
    \bar{\pmb{y}}_u 
    &= \bar{\pmb{h}}_u x_{\text{c,r}}
      + \bar{\pmb{H}}_u \pmb{x}_{\text{p}}
      + \bar{\pmb{n}}_u
      =
      \bar{\pmb{h}}_u x_{\text{c,r}} + \tilde{\pmb{n}}_u,
\end{align}
where \(\bar{\pmb{h}}_u \in \mathbb{C}^{N_{\text{obs}}}\) is the pilot-induced channel vector over the selected observation indices. Its \(Q\) entries corresponding to the actual delay--Doppler taps are determined by the elements of \(\pmb{h}_u\) up to known deterministic phase shifts, while all the remaining entries are zero. Moreover, \(\bar{\pmb{H}}_u \in \mathbb{C}^{N_{\text{obs}} \times MN}\) is the channel matrix associated with the aggregate private-message vector \(\pmb{x}_{\text{p}}=\sum_{j=1}^{U}\pmb{x}_{\text{p},j}\), \(\bar{\pmb{n}}_u\sim \mathcal{CN}(\pmb{0}_{N_{\text{obs}}}, \sigma_n^2 \pmb{I}_{N_{\text{obs}}})\) is the noise vector, and \(\tilde{\pmb{n}}_u = \bar{\pmb{H}}_u \pmb{x}_{\text{p}} + \bar{\pmb{n}}_u\) denotes the effective noise vector.
 Before presenting the channel estimation procedure, we first characterize the distribution of the effective noise in the following lemma.
\begin{lemma}
\label{lemma1}
The effective noise vector \(\tilde{\pmb{n}}_u\) can be approximated as
\begin{align}
\tilde{\pmb{n}}_u
\sim
\mathcal{CN}\!\left(
\pmb{0}_{N_{\text{obs}}},
(P_{\text{p}}\sigma_u^2+\sigma_n^2)\pmb{I}_{N_{\text{obs}}}
\right).
\end{align}
\end{lemma}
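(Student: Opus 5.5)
We establish the lemma by computing the first two moments of $\tilde{\pmb{n}}_u=\bar{\pmb{H}}_u\pmb{x}_{\text{p}}+\bar{\pmb{n}}_u$ exactly, and then invoking a Gaussian approximation for the interference term. The zero mean is immediate: $\pmb{x}_{\text{p}}$ is zero-mean and independent of $\pmb{h}_u$, so $\mathbb{E}[\bar{\pmb{H}}_u\pmb{x}_{\text{p}}]=\mathbb{E}_{\pmb{h}_u}[\bar{\pmb{H}}_u\,\mathbb{E}[\pmb{x}_{\text{p}}]]=\pmb{0}$, and the noise is zero-mean. Since the noise is independent of both the channel and the data, the covariance splits as
\begin{align}
\mathbb{E}[\tilde{\pmb{n}}_u\tilde{\pmb{n}}_u^H]=\mathbb{E}[\bar{\pmb{H}}_u\pmb{x}_{\text{p}}\pmb{x}_{\text{p}}^H\bar{\pmb{H}}_u^H]+\sigma_n^2\pmb{I}_{N_{\text{obs}}}. \notag
\end{align}

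For the interference term, we first condition on the channel. The private symbols are uncorrelated across indices and users, so $\mathbb{E}[\pmb{x}_{\text{p}}\pmb{x}_{\text{p}}^H]=\sum_j P_{\text{p},j}\pmb{I}_{MN}=P_{\text{p}}\pmb{I}_{MN}$. The covariance therefore reduces to $P_{\text{p}}\,\mathbb{E}[\bar{\pmb{H}}_u\bar{\pmb{H}}_u^H]$. We then write the entry of $\bar{\pmb{H}}_u$ at a pair of observation and input indices using the scalar input--output relation: the row for $(l,k)$ equals $h_{u,q}\alpha_{u,q}(l,k)$ at column $\nu([l-l_{u,q}]_M,[k-k_{u,q}]_N)$, for each $q$, and is zero elsewhere. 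Consequently,
\begin{align}
(\bar{\pmb{H}}_u\bar{\pmb{H}}_u^H)(i,i')=\sum_{q,q'}h_{u,q}h_{u,q'}^*\alpha_{u,q}(l,k)\alpha^*_{u,q'}(l',k')\,\mathbb{1}\{\text{same column}\}. \notag
\end{align}
Taking the expectation over the uncorrelated gains removes all $q\neq q'$ terms. When $q=q'$, the two rows hit the same column only if $(l,k)=(l',k')$, because the cyclic shift by the same $(l_{u,q},k_{u,q})$ is a bijection. The diagonal entries equal $\sum_q\sigma_{u,q}^2|\alpha_{u,q}(l,k)|^2=\sigma_u^2$, using $|\alpha_{u,q}|=1$. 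This gives $\mathbb{E}[\bar{\pmb{H}}_u\bar{\pmb{H}}_u^H]=\sigma_u^2\pmb{I}$ and the stated covariance $(P_{\text{p}}\sigma_u^2+\sigma_n^2)\pmb{I}_{N_{\text{obs}}}$. We should also check that the private messages occupy the whole grid, including the pilot and guard region, so every tap contributes with power $P_{\text{p}}$ at every observation index.

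The main obstacle is justifying the Gaussian form, since $\bar{\pmb{H}}_u\pmb{x}_{\text{p}}$ is a sum of products of Gaussian gains and finite-constellation symbols, and is not exactly Gaussian. We would argue this as an approximation. Each entry is a sum of $QU$ independent zero-mean contributions, of the form $h_{u,q}\alpha\sum_j x_{\text{p},j}[\cdot]$. A central-limit-type argument over users and paths, or simply moment matching, where we replace the interference by the Gaussian with the same first and second moments, yields the circularly symmetric complex Gaussian. Circular symmetry follows because $h_{u,q}$ is circularly symmetric, so the pseudo-covariance vanishes. The moment computation is exact; only the distributional form is the approximation claimed in the lemma.
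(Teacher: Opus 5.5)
Your proposal is correct and follows the paper's proof. You show the mean is zero, use $\mathbb{E}[\pmb{x}_{\text{p}}\pmb{x}_{\text{p}}^H]=P_{\text{p}}\pmb{I}_{MN}$ and $\mathbb{E}[\bar{\pmb{H}}_u\bar{\pmb{H}}_u^H]=\sigma_u^2\pmb{I}_{N_{\text{obs}}}$ to get the covariance, and invoke a CLT/moment-matching argument for the Gaussian form; your entrywise bijection argument also verifies the identity $\mathbb{E}[\bar{\pmb{H}}_u\bar{\pmb{H}}_u^H]=\sigma_u^2\pmb{I}_{N_{\text{obs}}}$, which the paper only asserts.

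One small slip: for a fixed path $q$, the $U$ terms $h_{u,q}\alpha_{u,q}(l,k)\,x_{\text{p},j}[\cdot]$ share the gain $h_{u,q}$, so your $QU$ contributions are uncorrelated rather than independent. This is harmless, since the lemma claims the Gaussian form only as an approximation.
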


\begin{proof}
From \(\tilde{\pmb{n}}_u=\bar{\pmb{H}}_u\pmb{x}_{\text{p}}+\bar{\pmb{n}}_u\),
the aggregate private-message interference is zero-mean and can be
approximated as Gaussian by the central limit theorem~\cite{OTFS-central}.
Moreover, since
\(\mathbb{E}[\pmb{x}_{\text{p}}\pmb{x}_{\text{p}}^{H}]
=P_{\text{p}}\pmb{I}_{MN}\) and
\(\mathbb{E}[\bar{\pmb{H}}_u\bar{\pmb{H}}_u^{H}]
=\sigma_u^2\pmb{I}_{N_{\text{obs}}}\), we have
\begin{align}
\mathbb{E}\!\left[
\tilde{\pmb{n}}_u\tilde{\pmb{n}}_u^{H}
\right]
=
(P_{\text{p}}\sigma_u^2+\sigma_n^2)
\pmb{I}_{N_{\text{obs}}},
\end{align}
which completes the proof.
\end{proof}

Based on Lemma~\ref{lemma1}, the effective noise is characterized as isotropic and can be modeled as white Gaussian noise. Accordingly, the threshold-based channel tap estimation method in \cite{OTFS_channel_est1} can be applied to identify the delay--Doppler support from $\bar{\pmb{y}}_u$ based on \eqref{estimation_equation}. This yields the corresponding delay and Doppler indices \cite{OTFS_channel_est1, OTFS_channel_est2}. Since this procedure is well established in prior work, we omit further details for brevity.
{Motivated by the near-perfect delay--Doppler support detection demonstrated at moderate signal-to-noise ratio (SNR) in~\cite{OTFS_superimposed1}, we assume that the support of the $Q$ channel taps is perfectly estimated.} Under this assumption, we extract the \(Q\)-dimensional subvector {\(\tilde{\pmb{y}}_u\)}  of
\(\bar{\pmb{y}}_u\) corresponding to the estimated delay--Doppler support and
index it consistently with \(\pmb{h}_u\). Since the OTFS path variances follow
a delay-dependent power profile~\cite{OTFS_channel_est1,OTFS_superimposed1},
\(\tilde{\pmb{y}}_u(q)\) is associated with \(h_{u,q}\) and its variance
\(\sigma_{u,q}^2\) according to the corresponding delay index. The channel
gains are then estimated using the LMMSE estimator~\cite{estimation_theory}. The known path-dependent phases are compensated when forming
$\tilde{\pmb {y}}_u$, and the pilot is chosen as
$x_{\text{c,r}}=\sqrt{P_{\text{c,r}}}$.
\begin{theorem}
\label{theorem 1}
{Based on \(\tilde{\pmb{y}}_u\), the LMMSE estimate of the \(q\)-th element of \(\pmb{h}_u\) is given by}
\begin{align}
\hat{\pmb{h}}_u(q)
=
\frac{\sqrt{P_{\text{c,r}}}\sigma_{u,q}^2}
{P_{\text{c,r}}\sigma_{u,q}^2+\tilde{\sigma}_{n,u}^2}
\tilde{\pmb{y}}_u(q),
\label{eq:LMMSE_est}
\end{align}
where \(\tilde{\sigma}_{n,u}^2=P_{\text{p}}\sigma_u^2+\sigma_n^2\).  
Furthermore, the channel estimation error covariance matrix is given by the following diagonal matrix:
\begin{align}
\label{channel_error_matrix}
    \pmb{\Sigma}_{u}=\mathbb{E}[\pmb{h}_u^{\text{err}} \pmb{h}_u^{\text{err}\,H}]
    = \text{diag}((\sigma_{u,1}^{\text{err}})^2, \dots, (\sigma_{u,Q}^{\text{err}})^2),
\end{align}
where $\pmb{h}_u^{\text{err}}={\pmb{h}}_u-\hat{\pmb{h}}_u$ is the channel error vector. Here, $(\sigma_{u,q}^{\text{err}})^2$ is given as
\begin{align}
    (\sigma_{u,q}^{\text{err}})^2
    = \frac{\tilde{\sigma}_{n,u}^2 \, \sigma_{u,q}^2}
           {P_{\text{c,r}} \sigma_{u,q}^2 + \tilde{\sigma}_{n,u}^2}.
    \label{eq:LMMSE_MSE}
\end{align}
\end{theorem}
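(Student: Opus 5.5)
The plan is to reduce the problem to a scalar (or diagonal) linear Gaussian model and apply the standard LMMSE formulas.

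\textbf{Step 1: the observation model.} After the delay--Doppler support is identified and the known path-dependent phases are compensated, each entry of $\tilde{\pmb{y}}_u$ reads
\begin{align}
\tilde{\pmb{y}}_u(q)=\sqrt{P_{\text{c,r}}}\,h_{u,q}+\tilde{n}_{u,q},
\end{align}
where $\tilde{n}_{u,q}$ is the corresponding entry of $\tilde{\pmb{n}}_u$ after multiplication by a unit-modulus phase. By Lemma~\ref{lemma1}, the vector $[\tilde{n}_{u,1},\ldots,\tilde{n}_{u,Q}]^T$ is (approximately) distributed as $\mathcal{CN}(\pmb{0}_Q,\tilde{\sigma}_{n,u}^2\pmb{I}_Q)$, since phase rotations preserve the white covariance. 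In vector form,
\begin{align}
\tilde{\pmb{y}}_u=\sqrt{P_{\text{c,r}}}\,\pmb{h}_u+\tilde{\pmb{n}}'_u .
\end{align}

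\textbf{Step 2: uncorrelatedness of $\pmb{h}_u$ and $\tilde{\pmb{n}}'_u$.} This is the main obstacle. The interference term $\bar{\pmb{H}}_u\pmb{x}_{\text{p}}$ depends on the same channel gains $h_{u,q}$. However, $\pmb{x}_{\text{p}}$ is zero-mean and independent of $\pmb{h}_u$, so
\begin{align}
\mathbb{E}[\pmb{h}_u(\bar{\pmb{H}}_u\pmb{x}_{\text{p}})^H]=\mathbb{E}_{\pmb{h}}\big[\pmb{h}_u\,\mathbb{E}[\pmb{x}_{\text{p}}]^H\bar{\pmb{H}}_u^H\big]=\pmb{0}.
\end{align}
The two are therefore uncorrelated, and second-order statistics are all the LMMSE estimator requires. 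This yields the cross-covariance and observation covariance
\begin{align}
\pmb{K}_{\pmb{h}\tilde{\pmb{y}}}=\sqrt{P_{\text{c,r}}}\,\pmb{K}_{\pmb{h}_u,\pmb{h}_u},\qquad
\pmb{K}_{\tilde{\pmb{y}}\tilde{\pmb{y}}}=P_{\text{c,r}}\pmb{K}_{\pmb{h}_u,\pmb{h}_u}+\tilde{\sigma}_{n,u}^2\pmb{I}_Q .
\end{align}
Both means are zero.

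\textbf{Step 3: the LMMSE estimate.} Substituting into the LMMSE formula $\hat{\pmb{h}}_u=\pmb{K}_{\pmb{h}\tilde{\pmb{y}}}\pmb{K}_{\tilde{\pmb{y}}\tilde{\pmb{y}}}^{-1}\tilde{\pmb{y}}_u$, both matrices are diagonal, so the inverse is taken entrywise. The $q$-th entry is exactly \eqref{eq:LMMSE_est}.

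\textbf{Step 4: the error covariance.} The error covariance is $\pmb{\Sigma}_u=\pmb{K}_{\pmb{h}_u,\pmb{h}_u}-\pmb{K}_{\pmb{h}\tilde{\pmb{y}}}\pmb{K}_{\tilde{\pmb{y}}\tilde{\pmb{y}}}^{-1}\pmb{K}_{\pmb{h}\tilde{\pmb{y}}}^H$. This is again diagonal, with $q$-th entry
\begin{align}
\sigma_{u,q}^2-\frac{P_{\text{c,r}}\sigma_{u,q}^4}{P_{\text{c,r}}\sigma_{u,q}^2+\tilde{\sigma}_{n,u}^2}=\frac{\tilde{\sigma}_{n,u}^2\sigma_{u,q}^2}{P_{\text{c,r}}\sigma_{u,q}^2+\tilde{\sigma}_{n,u}^2},
\end{align}
which is \eqref{eq:LMMSE_MSE}. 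The off-diagonal entries vanish because every matrix involved is diagonal, which establishes \eqref{channel_error_matrix}.

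Apart from Step 2, the only delicate point is that the covariance in Lemma~\ref{lemma1} is itself an averaged approximation. The result should therefore be stated as exact under that approximate second-order model.
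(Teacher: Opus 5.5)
Your proposal is correct and follows essentially the same route as the paper: write $\tilde{\pmb{y}}_u=\sqrt{P_{\text{c,r}}}\,\pmb{h}_u+\dot{\pmb{n}}_u$ with white effective noise of variance $\tilde{\sigma}_{n,u}^2$ from Lemma~\ref{lemma1}, and note that the diagonal covariances make the LMMSE estimator decouple into scalar per-tap estimates and error variances. Your Step~2 ($\pmb{h}_u$ is uncorrelated with $\bar{\pmb{H}}_u\pmb{x}_{\text{p}}$ because $\mathbb{E}[\pmb{x}_{\text{p}}]=\pmb{0}$) and your explicit error-covariance computation spell out points the paper only asserts.
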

\begin{proof}
The input--output relation can be written as $\tilde{\pmb{y}}_u={\pmb{h}}_u x_{\text{c,r}} + \dot{\pmb{n}}_u,$ 
where $\dot{\pmb{n}}_u\in \mathbb{C}^Q$ is the associated subvector of $\tilde{\pmb{n}}_u$.
By Lemma 1, the effective noise vector $\tilde{\pmb{n}}_u$ is characterized as isotropic. Therefore, the subvector $\dot{\pmb{n}}_u$ extracted from $\tilde{\pmb{n}}_u$ can be modeled as $\dot{\pmb{n}}_u \sim \mathcal{\mathcal{CN}}\!\left(\pmb{0}_Q,\, \left(P_{\text{p}}\sigma_u^2 + \sigma_n^2\right)\pmb{I}_Q\right).$ Therefore, the LMMSE estimator for the channel vector $ {\pmb{h}}_u $ given the observation $\tilde{\pmb{y}}_u $ is
\begin{align}
    \hat{\pmb{h}}_u
    = \mathbb{E}\!\left[ {\pmb{h}}_u \tilde{\pmb{y}}_u^{H} \right]
    \left( \mathbb{E}\!\left[ \tilde{\pmb{y}}_u \tilde{\pmb{y}}_u^{H} \right] \right)^{-1} \tilde{\pmb{y}}_u.
\end{align}
The channel taps are uncorrelated, and the effective noise
has covariance $\widetilde{\sigma}_{n,u}^{2}\mathbf I_Q$
and is uncorrelated with the channel vector. Therefore,
the LMMSE estimator decouples across taps. Hence, for each tap indexed by $ q $,
\begin{align}
    \hat{\pmb{h}}_u(q)
=
\frac{\sqrt{P_{\text{c,r}}}\sigma_{u,q}^2}
{P_{\text{c,r}}\sigma_{u,q}^2+\tilde{\sigma}_{n,u}^2}
\tilde{\pmb{y}}_u(q).
\end{align}

Since each tap is estimated independently, the corresponding estimation error covariance matrix is given by \eqref{channel_error_matrix}.
\end{proof}
{Building on the closed-form MSE expression derived above, we account for  the residual interference resulting from imperfect SIC due to channel estimation errors.} 
This constitutes a key distinction of our work from many existing studies, which ignore channel uncertainty and the resulting SIC error. 
\subsection{Message-Passing-Based Data Detection}
Based on \eqref{receive signal}, equalization can be performed using the inverse of the channel matrix $\pmb{H}_u$ to recover the transmitted symbols. However, since the effective OTFS channel matrix has a size of $MN \times MN$, this approach incurs a computational complexity of $\mathcal{O}((MN)^3)$, rendering it impractical for large-scale systems. To address this issue, we adopt a message-passing-based decoding approach \cite{OTFS_superimposed1}, following the low-complexity detection framework commonly used in conventional OTFS systems. {The message-passing algorithm constructs a factor graph consisting of
variable nodes for transmitted data symbols and observation nodes for
the entries of the received vector.}
Data detection is then performed through iterative message exchanges
between connected nodes until convergence.
The detailed procedure is described in the following.

First, each user decodes the common data before decoding its private message. The received vector at user \(u\) from the perspective of common-data decoding can be expressed as follows: 
\begin{align}
\label{commonmessage}
\pmb{y}_{u}
&=
\widehat{\pmb{H}}_u \pmb{x}_{\text{c}}
+
\underbrace{\pmb{H}_u^{\text{err}}\pmb{x}_{\text{c}}}_{\text{estimation error}}
+
\underbrace{\pmb{H}_u \pmb{x}_{\text{p}}}_{\text{private interference}}
+
\pmb{n}_u,
\end{align}
where $\hat{\pmb{H}}_u
    = ( \pmb{F}_N \otimes \pmb{G}_{\text{rx}} )
      \left( \sum_{q=1}^{Q} \hat{h}_{u,q}\,
             \pmb{\Pi}^{\,\hat{l}_{u,q}} \pmb{\Delta}^{\,\hat{k}_{u,q}} \right)
      ( \pmb{F}_N^H \otimes \pmb{G}_{\text{tx}} )$
is the estimated delay-Doppler channel matrix of user $u$. Here, $\hat{l}_{u,q}$ and $\hat{k}_{u,q}$ denote the estimated delay and Doppler indices of the $q$-th tap for user $u$, respectively. Moreover,
$\pmb{H}_u^{\text{err}}
    = ( \pmb{F}_N \otimes \pmb{G}_{\text{rx}} )
      \left( \sum_{q=1}^{Q} (h_{u,q} - \hat{h}_{u,q})\,
             \pmb{\Pi}^{\,\hat{l}_{u,q}} \pmb{\Delta}^{\,\hat{k}_{u,q}} \right)
      ( \pmb{F}_N^H \otimes \pmb{G}_{\text{tx}} )$
is the delay-Doppler channel error matrix. 
We define
\(\tilde{\pmb{n}}_{\text{c},u}
=
\pmb{H}_u^{\text{err}}\pmb{x}_{\text{c}}
+
\pmb{H}_u\pmb{x}_{\text{p}}
+
\pmb{n}_u\)
as the effective noise vector in common-data decoding. Accordingly,
\(\pmb{y}_u(a)\) is determined by \(Q\) elements of
\(\pmb{x}_{\text{c}}\) through the sparse OTFS channel and the effective
noise term \(\tilde{\pmb{n}}_{\text{c},u}(a)\). Based on this vectorized
representation, we construct a message-passing factor graph by treating
\(\pmb{y}_u(a)\) as the \(a\)-th observation node and
\(\pmb{x}_{\text{c}}(b)\) as the \(b\)-th variable node. The corresponding input-output relation can be written as
\begin{align}
    &\pmb{y}_u(a) = \sum_{b\in \mathcal{I}_u(a)}\hat{\pmb{H}}_u(a,b) \pmb{x}_{\text{c}}(b)+\tilde{\pmb{n}}_{\text{c},u}(a) \notag \\ &= \hat{\pmb{H}}_u(a,b) \pmb{x}_{\text{c}}(b)+\sum_{d\in \mathcal{I}_u(a), d\neq b }\hat{\pmb{H}}_u(a,d) \pmb{x}_{\text{c}}(d)+\tilde{\pmb{n}}_{\text{c},u}(a),
    \notag
\end{align}
where {$\mathcal{I}_u(a)$} denotes the set of column indices corresponding to the nonzero entries in the $a$-th row of $\hat{\pmb{H}}_u$. Similarly, let $\mathcal{J}_u(b)$ denote the set of row indices corresponding to the nonzero entries in the $b$-th column.

In the message-passing algorithm, each observation node sends the mean and variance of the residual interference to its connected variable nodes at every iteration. {For notational simplicity, the user index \(u\) is omitted in the following message-passing variables.} Let \(\mu_{a,b}^{(i)}\) and \((\sigma_{a,b}^{(i)})^2\) denote the mean and variance of the residual interference at observation node \(a\), excluding variable node \(b\), in the \(i\)-th iteration. Following the Gaussian message-passing approximation in~\cite{OTFS_superimposed1}, they are given by
\begin{align}
    \mu_{a,b}^{(i)}
    = \sum_{\substack{d \in \mathcal{I}(a) \\ d \neq b}}
      & \sum_{j=0}^{S}
        p_{{d},a}^{(i-1)}(\alpha_j)\,\alpha_{j}\,\hat{\pmb{H}}_u(a,d)
       + \mathbb{E}[\tilde{\pmb{n}}_{\text{c},u}(a)], \label{common_mu}\\[4pt]
(\sigma_{a,b}^{(i)})^{2} 
    =& \sum_{\substack{d \in \mathcal{I}(a) \\ d \neq b}}
       \sum_{j=0}^{S}
        p_{d,a}^{(i-1)}(\alpha_j)\,|\alpha_{j}|^2\,\big|\hat{\pmb{H}}_u(a,d)\big|^2 \notag \\
       & - |\mu_{a,b}^{(i)}|^2
       + \mathbb{V}[\tilde{\pmb{n}}_{\text{c},u}(a)] \label{common_var},
\end{align}
{where \(p_{d,a}^{(i-1)}(\alpha_j)\) denotes the probability mass function (PMF) passed from variable
node \(d\) to observation node \(a\) for symbol \(\alpha_j\) at the
\((i-1)\)-th iteration.
To account for the zero-valued guard symbols within the
message-passing framework, we augment the data constellation
\(\mathcal{S}\) with the zero symbol, i.e.,
\(\alpha_j\in\{0\}\cup\mathcal{S}\).
Owing to the guard region, the pilot symbol \(x_{\text{c,r}}\) does not
interfere with the common-data symbols during common-data detection.
Thus, for notational convenience, the pilot position is also represented
by the zero symbol.}
Accordingly, the PMFs are initialized as
\(p_{d,a}^{(0)}(\alpha_j)=1/S\) for
\(d\in\mathcal{A}_{\text{data}}\) and \(\alpha_j\in\mathcal{S}\), whereas
\(p_{d,a}^{(0)}(0)=1\) for
\(d\in\mathcal{A}_{\text{pilot}}\cup\mathcal{A}_{\text{guard}}\), with all remaining probabilities set to zero. Starting
from these initial PMFs, the mean and variance \(\mu_{a,b}^{(i)}\) and
\((\sigma_{a,b}^{(i)})^2\) are computed using \(p_{d,a}^{(i-1)}(\alpha_j)\)
together with the mean and variance of
\(\tilde{\pmb{n}}_{\text{c},u}(a)\), which are derived next. These
mean--variance messages are then used to update \(p_{b,a}^{(i)}(\alpha_j)\),
as described later in this section.



We now derive the mean and variance of $\tilde{\pmb{n}}_{\text{c},u}(a)$. 
From the adopted channel model, all channel coefficients are zero-mean. In addition, the common data and private message symbols are assumed to be zero-mean and $ \pmb{n}_u $ is additive white Gaussian noise with zero mean.
Therefore,
$ \mathbb{E}[\tilde{\pmb{n}}_{\text{c},u}(a)] = {0}, \forall a.$
Hence, the variance of $\tilde{\pmb{n}}_{\text{c},u}(a)$ coincides with its second moment:
\begin{align}
    \mathbb{V}[\tilde{\pmb{n}}_{\text{c},u}(a)]
    =\sum_{b\in\mathcal{I}(a)} {\pmb{\Sigma}}_u^{\text{err}}(a,b) {P}_{\text{c}}(b)+ P_{\text{p}} \textstyle\sum_{q} \sigma_{u,q}^{2} + \sigma_n^{2}, {\forall a},
    \notag
\end{align}
where $\pmb{\Sigma}_u^{\text{err}}(a,b)
    = \mathbb{E}[|H^{\text{err}}_u(a,b)|^2]$
denotes the element-wise channel-error variance and $P_{\text{c}}(b)$ denotes the position-dependent common-symbol power given as
\begin{align}
    {P}_{\text{c}}(b)
    =
    \begin{cases}
        P_{\text{c,r}}, & b \in \mathcal{A}_{\text{pilot}}, \\[2pt]
        0,       & b \in \mathcal{A}_{\text{guard}}, \\[2pt]
        P_{\text{c,d}}, & b \in \mathcal{A}_{\text{data}}.
    \end{cases}
\end{align}
At each variable node in the \(i\)-th iteration, the received mean and
variance, {$\mu_{a,b}^{(i)}$ and $  (\sigma_{a,b}^{(i)})^{2}$}, are used to compute the PMF \(p_{b,a}^{(i)}(\alpha_j)\). 
Specifically, the PMF is given by
\begin{align}
\label{PMF_calculate}
p_{b,a}^{(i)}(\alpha_j)
=
\begin{cases}
\displaystyle
\frac{
\prod_{c \in \mathcal{J}_{u,b}^{-a}}\beta_{c,b,j}
}{
\sum_s
\prod_{c \in \mathcal{J}_{u,b}^{-a}}\beta_{c,b,s}
},
& \!b \in \! \mathcal{A}_{\text{data}}, \\[12pt]
1,
& \!b \! \in\! \mathcal{A}_{\text{guard}}\cup\mathcal{A}_{\text{pilot}},\ \alpha_j = 0, \\[2pt]
0,
& \!b \!\in \!\mathcal{A}_{\text{guard}}\cup\mathcal{A}_{\text{pilot}},\ \alpha_j \neq 0,
\end{cases}
\end{align}
where
\(\beta_{c,b,s}
= \exp\!\left(
-\frac{|\pmb{y}_u(c)-\mu_{c,b}^{(i)}
-\widehat{\pmb{H}}_u(c,b)\alpha_s|^2}
{(\sigma_{c,b}^{(i)})^{2}}
\right)\) and \(\mathcal{J}_{u,b}^{-a}\triangleq \mathcal{J}_u(b)\setminus\{a\}\).
To improve stability and convergence, we apply a damping scheme~\cite{damping},
yielding
\begin{align}
\label{PMF_damping}
p_{b,a}^{(i)}(\alpha_j)
\leftarrow
\Delta\, p_{b,a}^{(i-1)}(\alpha_j)
+
(1-\Delta)\, p_{b,a}^{(i)}(\alpha_j),
\end{align}
where \(\Delta \in [0,1)\) is the damping factor. The resulting edge PMF
\(p_{b,a}^{(i)}(\alpha_j)\) is then passed to the connected observation nodes. {For convergence checking and symbol decision, we define
\(p_b^{(i)}(\alpha_j)\) as the marginal PMF of
\(\alpha_j\in\mathcal{S}\) at variable node \(b\) in the \(i\)-th
iteration, obtained by combining all incoming messages from its
connected observation nodes as
\begin{align}
\label{marginal_PMF}
p_b^{(i)}(\alpha_j)
=
\frac{
\prod_{c\in\mathcal{J}(b)} \beta_{c,b,j}
}{
\sum_{s}
\prod_{c\in\mathcal{J}(b)} \beta_{c,b,s}
},
\quad \alpha_j\in\mathcal{S}.
\end{align}}
These mean–variance messages and PMFs are iteratively exchanged between the
observation and variable nodes. We define the convergence criterion as
\begin{align}
\label{convergence_criterion}
    \zeta^{(i)}
    = \frac{1}{N_{\text{c}}} \sum_{b\in \mathcal{A}_{\text{data}}}
      \mathbb{I}\!\left(
        \max_{\alpha_j \in \mathcal{S}} p_b^{(i)}(\alpha_j) \ge 1 - \epsilon
      \right),
\end{align}
where $N_{\text{c}}=MN-N_{\text{g}}-1$ is the number of common data symbols, and $\mathbb{I}(\cdot)$ denotes the indicator function. The algorithm terminates when $\zeta^{(i)} \ge \zeta_{\text{th}}$ or $i = i_{\text{max}}$, where $\zeta_{\text{th}} \in (0,1]$ is a predefined threshold and $i_{\text{max}}$ denotes the maximum number of iterations. The convergence criterion is evaluated only over the common-data positions.  Once convergence is achieved, the detected common-data symbol at the \(b\)-th variable node, \(b\in\mathcal{A}_{\text{data}}\), is chosen as
\begin{align}
\label{hard_decision}
    \hat{\pmb{x}}_{\text{c}}(b)
    =
    \arg\max_{\alpha_j\in\mathcal{S}} p_b^{(i)}(\alpha_j).
\end{align}
The message-passing algorithm for common-data detection, referred to as \textbf{MP-C}, is summarized in Algorithm~\ref{alg:message passing}.

\begin{algorithm}[t]
\caption{\\\textbf{~~Message Passing for Common-Data Detection (MP-C)}}
\label{alg:message passing}
\begin{algorithmic}[1]
\STATE \textbf{Input:} $\mathcal{S}$, $\pmb{y}_u$, $\hat{\pmb{H}}_u$, $\pmb{K}_{h_u h_u}$, $\pmb{\Sigma}_{u}$, $P_{\text{c,r}}$, $P_{\text{c,d}}$, $P_{\text{p},u}$,  $P_{\text{p}}$, $\sigma_n^2$, $\Delta$, $\epsilon$, $\zeta_{\text{th}}$ and $i_{\text{max}}$.
\STATE \textbf{Initialization:} Set \(i=1\). Initialize
\(p_{b,a}^{(0)}(\alpha_j)=1/|\mathcal{S}|\) for
\(b\in\mathcal{A}_{\text{data}}\), \(\alpha_j\in\mathcal{S}\), and
\(p_{b,a}^{(0)}(0)=1\) for
\(b\in\mathcal{A}_{\text{pilot}}\cup\mathcal{A}_{\text{guard}}\);
set all unspecified PMF entries to zero.
\REPEAT
    \STATE Compute $\mu_{a,b}^{(i)}$ and $(\sigma_{a,b}^{(i)})^2$ using \eqref{common_mu} and \eqref{common_var}.
    \STATE Calculate $\pmb{p}_{b,a}^{(i)}$ using \eqref{PMF_calculate} and update by \eqref{PMF_damping}.
    \STATE $i \gets i+1$
\UNTIL{$\zeta^{(i-1)} \ge \zeta_{\text{th}}$ or $i = i_{\text{max}}$}.
\STATE Compute symbol estimates $\hat{\pmb{x}}_c$ by \eqref{hard_decision}.
\STATE \textbf{Output:} $\hat{\pmb{x}}_c$
\end{algorithmic}
\end{algorithm}

For private message decoding, a similar message-passing procedure is applied. After removing the decoded common message via SIC, the resulting detection model retains the same structure as in the common data case, with a modified effective noise term. Under the assumption of perfect common data decoding, the received vector at user $u$ from the perspective of private message decoding can be written as 
\begin{align}
\label{private_message}
    \pmb{y}_{\text{p},u}&= \pmb{y}_{u}- \hat{\pmb{H}}_u \pmb{x}_{\text{c}}= 
    \hat{\pmb{H}}_u \pmb{x}_{\text{p},u}
    + \tilde{\pmb{n}}_{\text{p},u},
\end{align}
with the effective noise  $\tilde{\pmb{n}}_{\text{p},u}=\pmb{H}_u^{\text{err}}
      \big(\pmb{x}_{\text{c}} + \pmb{x}_{\text{p},u}\big)
    + \pmb{H}_u\big(\pmb{x}_{\text{p}} - \pmb{x}_{\text{p},u}\big)
    + \pmb{n}_u$. By formulating the message passing framework in the same manner as in the common data decoding, the input–output relation between the $a$-th observation node $\pmb{y}_{\text{p},u}(a)$ and the $b$-th variable node $\pmb{x}_{\text{p},u}(b)$ can be written as
\begin{align}
    &\pmb{y}_{\text{p},u}(a) = \sum_{b\in \mathcal{I}(a)}\hat{\pmb{H}}_u(a,b) \pmb{x}_{\text{p},u}(b)+\tilde{\pmb{n}}_{\text{p},u}(a) \notag \\ &= \hat{\pmb{H}}_u(a,b) \pmb{x}_{\text{p},u}(b)+\sum_{d\in \mathcal{I}(a), d\neq b }\hat{\pmb{H}}_u(a,d) \pmb{x}_{\text{p},u}(d)+\tilde{\pmb{n}}_{\text{p},u}(a).
    \notag
\end{align}
{Similar to the common-data case, the mean and variance of the effective
noise for private-message decoding are given by}
\begin{align}
    &\mathbb{E}[\tilde{\pmb{n}}_{\text{p},u}(a)] = {0},\qquad   \\
        &\mathbb{V}[\tilde{\pmb{n}}_{\text{p},u}(a)] \notag \\  &=\!\sum_{b\in\mathcal{I}(a)} \!{\Sigma}_u^{\text{err}}(a,\!b) \big(P_{\text{c}}(b)\! +\! P_{\text{p},u})
        + \sigma_{u}^{2}\big(P_{\text{p}} \!-\! P_{\text{p},u}) \!+\! \sigma_n^{2}.
\end{align}
All the remaining steps are identical to those of common-data decoding.
Specifically, in \eqref{common_mu} and \eqref{common_var},
\(\tilde{\pmb{n}}_{\text{c},u}(a)\) is replaced by
\(\tilde{\pmb{n}}_{\text{p},u}(a)\). The resulting mean--variance messages and
PMFs are iteratively exchanged until convergence, where the criterion in
\eqref{convergence_criterion} is evaluated by replacing
\(\mathcal{A}_{\text{data}}\) and \(N_{\text{c}}\) with
\(\{1,\ldots,MN\}\) and \(MN\), respectively. After convergence, the transmitted
symbols are estimated according to \eqref{hard_decision}, with
\(\hat{\pmb{x}}_{\text{c}}(b)\) replaced by
\(\hat{\pmb{x}}_{\text{p},u}(b)\). {Accordingly, the private-message detection algorithm, referred to as
\textbf{MP-P}, follows Algorithm~\ref{alg:message passing} with minor
modifications and is omitted for brevity.} 
\subsection{Overall Algorithm}
The overall procedure of the proposed OTFS--RSMA receiver is summarized as follows. First, each user estimates its delay--Doppler channel using the pilot symbol embedded in the common message and constructs the estimated channel matrix. Based on the estimated channel, the common message is decoded by the proposed \textbf{MP-C} detector and then canceled via SIC. Finally, each user decodes its own private message using the proposed \textbf{MP-P} detector.

{We analyze the computational complexity of the proposed channel estimation and decoding procedures. For each user, the \(Q\) delay--Doppler channel taps are estimated independently using scalar LMMSE operations, resulting in \(\mathcal{O}(Q)\) complexity, while the message-passing graph construction requires \(\mathcal{O}(MNQ)\) operations because each delay--Doppler node is connected to \(Q\) path-induced neighboring nodes. The dominant complexity comes from the message-passing detector, whose factor graph contains \(\mathcal{O}(MNQ)\) edges. Since each edge update in \eqref{common_mu} and \eqref{common_var} involves \(Q\) neighboring nodes and \(S\) constellation points, one iteration requires \(\mathcal{O}(MNQ^2S)\) operations. The final symbol decision requires only \(\mathcal{O}(MNS)\) operations and is negligible, and the SIC operation based on the sparse multiplication \(\hat{\pmb{H}}_u\hat{\pmb{x}}_{\text{c}}\) requires \(\mathcal{O}(MNQ)\) operations. Since both common-data and private-message decoding use the same message-passing procedure, the overall per-user complexity is \(\mathcal{O}(Q+MNQ+2IMNQ^2S+MNQ)=\mathcal{O}(IMNQ^2S)\), where \(I\) denotes the number of message-passing iterations.}
\begin{remark}
{Although the proposed channel-estimation and message-passing detection algorithms are developed for the GS configuration, they can be readily
extended to the other three configurations by modifying the corresponding effective-noise and symbol-region models.
For channel estimation, changing the private-message structure from superimposed to guard-based, i.e., from GS to GG, removes the $P_p\sigma_u^2$ interference term from
$\widetilde{\sigma}_{n,u}^2$, whereas changing the common-message structure
from guard-based to superimposed, i.e., from GS to SS, introduces the $P_{c,d}\sigma_u^2$ interference term.
For SG, both modifications apply simultaneously.
The message-passing detectors can be extended analogously by adjusting the configuration-dependent symbol regions and effective-noise terms, without changing their basic algorithmic structure.}
\end{remark}

\section{{Resource Allocation}}
The performance of the proposed OTFS--RSMA system critically depends on
the allocation of transmit power among the pilot, common data, and private
messages, as well as the allocation of the common rate among users.
In particular, these allocations should jointly balance channel estimation
accuracy and interference management while satisfying the users' rate
requirements under a limited transmit power budget.
Accordingly, in this section, we investigate the joint power and rate
allocation for sum-rate maximization subject to a total transmit power
constraint and per-user minimum-rate constraints.
We first derive the achievable common and private message rates for each user.
Since direct optimization of these rate expressions is intractable, we
introduce tractable surrogate rate functions and formulate the corresponding
joint allocation problem based on them.
We then develop an SCA-based algorithm \cite{boyd2004convex} to solve the
resulting problem. To further reduce the computational complexity, we develop
a low-complexity algorithm based on a conservative rate approximation and
analytical characterization.
\subsection{Problem Formulation}
First, we derive the rate expressions for the common and private messages. From \eqref{commonmessage} and \eqref{private_message}, the received vectors are given by $\pmb{y}_{u} = \hat{\pmb{H}}_u \pmb{x}_{\text{c}} + \tilde{\pmb{n}}_{\text{c},u}$ and $\pmb{y}_{\text{p},u} = \hat{\pmb{H}}_u \pmb{x}_{\text{p},u} + \tilde{\pmb{n}}_{\text{p},u}$. For tractable rate analysis, we adopt a Gaussian signaling-based rate model, in which the transmitted data symbols are modeled as Gaussian and the effective residual interference-plus-noise terms are approximated as Gaussian. Then, the rates of the common and private messages are given by \cite{tse2005fundamentals}
\begin{align}
R_{\text{c},u} &= \frac{1}{MN}\mathbb{E}_{\pmb{H}} \left[\log_2 \det\! \left(
\pmb{I}
+\hat{\pmb{H}}_u \pmb{K}_{{\text{c,d}}} \hat{\pmb{H}}_u^H
\pmb{K}_{\text{c},u}^{-1}
\right)\right],\label{common_rate} \\
R_{\text{p},u} &= \frac{1}{MN}\mathbb{E}_{\pmb{H}} \left[
\log_2\det \left(
\pmb{I}
+ P_{\text{p},u} \, \hat{\pmb{H}}_u \hat{\pmb{H}}_u^{H}
\pmb{K}_{\text{p},u}^{-1}
\right)
\right],\label{private_rate}
\end{align}
where
\(\pmb{K}_{\text{c,d}} = P_{\text{c,d}}\pmb{I}_{MN}
-\sum_{i\in\mathcal{A}_{\text{guard}}}
P_{\text{c,d}}\pmb{e}_{i}\pmb{e}_{i}^{H}
-P_{\text{c,d}}\pmb{e}_{r}\pmb{e}_{r}^{H}\)
is the covariance matrix of the common-data vector. Moreover,
\(\pmb{K}_{\text{c},u}\) and \(\pmb{K}_{\text{p},u}\), given in
\eqref{common effective noise} and \eqref{private effective noise}, are the
covariance matrices of \(\tilde{\pmb{n}}_{\text{c},u}\) and
\(\tilde{\pmb{n}}_{\text{p},u}\), respectively. Here,
\(\pmb{e}_i\in\mathbb{R}^{MN}\) denotes the standard basis vector whose
\(i\)-th entry is one, and \(r=\nu(l_{\text{r}},k_{\text{r}})\) denotes the
pilot vector index. The expectation is taken over the random channel
realization and the corresponding estimated channel matrix
\(\hat{\pmb{H}}_u\).                                                                                                                              
\begin{figure*}[t]
\begin{align}
\label{common effective noise}
\mathbf K_{\text c,u}
&=
\operatorname{diag}
\left(
\left\{
\sum_{b\in\mathcal I(a)}
\Sigma_u^{\text{err}}(a,b)P_c(b)
+
P_{\text p}\sigma_u^2+\sigma_n^2
\right\}_{a=1}^{MN}
\right),\\
\label{private effective noise}
\mathbf K_{\text p,u}
&=
\operatorname{diag}
\left(
\left\{
\sum_{b\in\mathcal I(a)}
\Sigma_u^{\text{err}}(a,b)P_c(b)
+
P_{\text p,u}\operatorname{Tr}(\boldsymbol{\Sigma}_u)
+
(P_{\text p}-P_{\text p,u})\sigma_u^2
+
\sigma_n^2
\right\}_{a=1}^{MN}
\right).
\end{align}
\hrulefill
\end{figure*}
However, direct optimization of the rate expressions in \eqref{common_rate} and \eqref{private_rate} is intractable due to the expectation over the random channel $\mathbb{E}_{\pmb{H}}$. To address this issue, we apply Jensen's inequality~\cite{boyd2004convex} to construct tractable surrogate rate functions, given by
\begin{align}
R_{\text{c},u}^{\text{s}} &=  \frac{1}{MN}\log_2 \det\! \left(
\pmb{I}
+\mathbb{E}_{\pmb{H}}[\hat{\pmb{H}}_u \pmb{K}_{\text{c,d}} \hat{\pmb{H}}_u^H]
\pmb{K}_{\text{c},u}^{-1}
\right),\label{common_rate_surrogate} \\
R_{\text{p},u}^{\text{s}} &= \frac{1}{MN}
\log_2\det \left(
\pmb{I}
+ P_{\text{p},u} \, \mathbb{E}_{\pmb{H}}[\hat{\pmb{H}}_u \hat{\pmb{H}}_u^{H}]
\pmb{K}_{\text{p},u}^{-1}
\right). \label{private_rate_surrogate}
\end{align}
Since
$\mathbb{E}_{\pmb{H}}[\hat{\pmb{H}}_u \hat{\pmb{H}}_u^H]
=
(\sigma_u^2-\operatorname{Tr}(\pmb{\Sigma}_u))\pmb{I}$,
$\pmb{K}_{\text{c,d}}$ is diagonal, and the OTFS path matrices are monomial with uncorrelated path gains, it follows that
$\mathbb{E}_{\pmb{H}}[\hat{\pmb{H}}_u \pmb{K}_{\text{c,d}} \hat{\pmb{H}}_u^H]$
is also diagonal. Consequently, the Jensen-based surrogate common rate is simplified as
\begin{align}
\label{common surrogate rate}
R_{\text{c},u}^{\text{s}}
=
\frac{1}{MN}
\sum_{a=1}^{MN}
\log_2
\left(
1+
\frac{
\rho_{u,a} 
}{
\displaystyle
\eta_{u,a}
}
\right),
\end{align}
where $\rho_{u,a}=P_{\text{c,d}}
\displaystyle\sum_{d\in\mathcal I_u(a)\cap\mathcal A_{\text{data}}}
\mathbb E_{\mathbf H}
[
|
\widehat{\mathbf H}_u(a,d)
|^2
]$ and $\eta_{u,a}=
\sum_{b\in\mathcal{I}_u(a)}
\Sigma_{u}^{\text{err}}(a,b) P_{\text{c}}(b)
+ \sigma_u^2 P_{\text{p}}
+ \sigma_n^2$.
The surrogate common rate is divided among all users, subject to
\begin{align}
    \sum_{u} C_u \le \min_{u} R_{\text{c},u}^{\text{s}},
    \label{common_mssage_allocaiton}
\end{align}
where $C_u$ is the common-rate portion assigned to user $u$. 
Moreover, the surrogate private rate is expressed as 
\begin{align}
\label{surrogate private rate}
R_{\text{p},u}^{\text{s}}
&=
\frac{1}{MN}
\sum_{a=1}^{MN}
\log_2\!\left(
\frac{\eta_{u,a}}{\eta_{u,a}-\kappa_u P_{\text{p},u}}
\right),
\end{align}
where $\kappa_u=\sigma_u^2-\operatorname{Tr}(\pmb{\Sigma}_u)$. 
\begin{remark}
\label{remark:surrogate} 
The surrogate rates in \eqref{common_rate_surrogate} and
\eqref{private_rate_surrogate} are introduced solely for tractable
power-allocation design; they are not conservative lower bounds on the
ergodic rates. Therefore, the surrogate rate expressions are used only for power-allocation optimization, whereas the numerical results in Section~V report the actual rates evaluated from the original expressions in \eqref{common_rate} and \eqref{private_rate}.
\end{remark}

Based on the surrogate rate expressions, we formulate the joint power and
rate allocation problem to maximize the sum surrogate rate by optimizing
the pilot power, common-data power, private-message powers, and common-rate
allocation, subject to user rate and power constraints, as follows:  
\begin{subequations}
\label{original Problem}
\begin{align}
&\max_{\pmb{z}} && 
\sum_u\big( C_u + R_{\text{p},u}^{\text{s}}(P_{\text{c,r}},P_{\text{c,d}},\pmb{P}_{\text{p}}) \label{objective_main_problem}
\big) \\
& \text{s.t.} && 
 C_u + R_{\text{p},u}^{\text{s}}(P_{\text{c,r}},P_{\text{c,d}},\pmb{P}_{\text{p}}) \ge R^{\text{th}}, \quad \forall u, \label{rate_threshold} \\
&&& P_{\text{c,r}} + MN\sum_u P_{\text{p},u}+N_{\text{c}}P_{\text{c,d}} \le  MNP_{\text{max}}, \label{power_constraint} \\
&&& C_u \ge 0,\ P_{\text{c,d}} \ge 0, P_{\text{c,r}}\ge0, \ P_{\text{p},u} \ge 0, \quad \forall u, \label{non-negtative}\\
&&&\eqref{common_mssage_allocaiton},  
\end{align}
\end{subequations}
where
\(\pmb{z}
=
[P_{\text{c,r}}, P_{\text{c,d}}, \pmb{P}_{\text{p}}^{\mathsf T},
\pmb{C}^{\mathsf T}]^{\mathsf T}\)
is the optimization variable vector. Here,
\(\pmb{P}_{\text{p}}=[P_{\text{p},1},\ldots,P_{\text{p},U}]^{\mathsf T}\)
and \(\pmb{C}=[C_1,\ldots,C_U]^{\mathsf T}\) collect the private-message
powers and common-rate portions of all users, respectively. Constraint \eqref{rate_threshold} ensures the rate requirements, \eqref{power_constraint} limits the average transmit power, and \eqref{non-negtative} ensures nonnegative variables.

Problem \eqref{original Problem} is difficult to solve because the variables are nonlinearly coupled and the resulting optimization is nonconvex. Accordingly, in the following subsections, we first propose a 2D-search-based SCA algorithm to solve the optimization problem. We then develop an additional 2D-search-based low-complexity algorithm to mitigate the high computational burden of the SCA-based approach. The detailed procedure is presented as follows.

\subsection{SCA-Based 2D Grid Search Algorithm}
In this subsection, we solve Problem~\eqref{original Problem} using an
SCA-based 2D grid search. We first reformulate the problem for fixed
pilot and common-data powers, \(P_{\text{c,r}}\) and \(P_{\text{c,d}}\),
based on the following lemma. The resulting problem is solved by SCA,
and the best solution is selected over the 2D grid of
\((P_{\text{c,r}},P_{\text{c,d}})\).
\begin{lemma}
\label{Lemma2}
At the optimum of Problem~\eqref{original Problem}, constraints
\eqref{common_mssage_allocaiton} and \eqref{power_constraint} hold with
equality.
\end{lemma}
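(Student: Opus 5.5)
The argument is by contradiction, treating the two constraints separately and using only the monotonicity of the surrogate rates in the power variables.

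\emph{Step 1: constraint \eqref{common_mssage_allocaiton} is tight.} Suppose an optimal $\pmb{z}^\star$ has $\sum_u C_u^\star < \min_u R_{\text{c},u}^{\text{s}}$. The slack $\delta>0$ can be added to any one $C_u$, say $C_1$. This leaves \eqref{power_constraint} and \eqref{non-negtative} satisfied, since the powers are unchanged, and keeps \eqref{rate_threshold} satisfied, since each left-hand side can only grow. The objective increases strictly by $\delta$, a contradiction.

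\emph{Step 2: constraint \eqref{power_constraint} is tight.} Suppose $P_{\text{c,r}}+MN\sum_u P_{\text{p},u}+N_{\text{c}}P_{\text{c,d}}<MNP_{\max}$ at the optimum. Because the noise $\sigma_n^2$ appears in every $\eta_{u,a}$ and in \eqref{eq:LMMSE_MSE}, I will scale all powers jointly, $(P_{\text{c,r}},P_{\text{c,d}},\pmb{P}_{\text{p}})\mapsto t(P_{\text{c,r}},P_{\text{c,d}},\pmb{P}_{\text{p}})$ with $t>1$ chosen small enough that the budget still holds. The key structural fact is the following. The estimation-error variances $(\sigma_{u,q}^{\text{err}})^2$ depend on the powers only through the ratio $P_{\text{c,r}}/\tilde{\sigma}_{n,u}^2$, where $\tilde{\sigma}_{n,u}^2=P_{\text{p}}\sigma_u^2+\sigma_n^2$. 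Under scaling, $tP_{\text{c,r}}/(tP_{\text{p}}\sigma_u^2+\sigma_n^2)$ is nondecreasing in $t$, so each $(\sigma_{u,q}^{\text{err}})^2$ is nonincreasing. Hence $\kappa_u$ is nondecreasing and $\Sigma_u^{\text{err}}(a,b)$ is nonincreasing.

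Every SINR-type ratio $\rho_{u,a}/\eta_{u,a}$ and $\kappa_u P_{\text{p},u}/(\eta_{u,a}-\kappa_u P_{\text{p},u})$ then has the form (signal $\propto t$)/(interference terms $\propto t$ times nonincreasing coefficients $+\sigma_n^2$). Dividing numerator and denominator by $t$ shows that each ratio is nondecreasing in $t$, and strictly increasing whenever its signal term is positive, because $\sigma_n^2/t$ strictly decreases. Consequently $R_{\text{c},u}^{\text{s}}$ and $R_{\text{p},u}^{\text{s}}$ do not decrease. The $C_u$ therefore remain feasible for \eqref{common_mssage_allocaiton}, \eqref{rate_threshold} still holds, and the sum of $R_{\text{p},u}^{\text{s}}$ strictly increases provided some $P_{\text{p},u}>0$. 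If instead all $P_{\text{p},u}=0$, the strict increase shows up in $\min_u R_{\text{c},u}^{\text{s}}$ (assuming $P_{\text{c,d}}>0$), and Step 1's reallocation converts it into a strict objective gain. Either way this contradicts optimality.

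\emph{Main obstacle.} The delicate point is Step 2's monotonicity under simultaneous scaling. Raising a single power in isolation can hurt: $P_{\text{p}}$ worsens the channel estimate and adds interference, and $P_{\text{c,d}}$ appears in $\eta_{u,a}$ through $\Sigma_u^{\text{err}}$. This is why I would use joint proportional scaling rather than coordinatewise increases, and then verify the monotonicity term by term in $\eta_{u,a}$ and $\kappa_u$. I also need to dispose of the degenerate all-zero-power point, which cannot be optimal when $R^{\text{th}}>0$ or whenever any positive power improves the objective.
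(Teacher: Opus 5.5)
Your proposal is correct and follows the same proof-by-contradiction route that the paper only sketches by pointing to Proposition~1 of \cite{RSMA-shared}. Your joint proportional scaling of $(P_{\text{c,r}},P_{\text{c,d}},\pmb{P}_{\text{p}})$ is exactly the detail needed here: the error variances depend on the powers through $P_{\text{c,r}}/\tilde{\sigma}_{n,u}^2$, so raising a single power need not increase every surrogate rate. One small loose end is that the degenerate points to exclude are all those where the surrogate rates vanish, not only the all-zero-power point. Examples are $P_{\text{c,r}}=0$, which gives $\kappa_u=0$ and $\rho_{u,a}=0$, or $P_{\text{c,d}}=0$ with $\pmb{P}_{\text{p}}=\pmb{0}$; these are dismissed by the same remark you already make.
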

\begin{proof}
The result can be proved by contradiction using arguments similar to those in the proof of Proposition 1 in~\cite{RSMA-shared}; hence, the details are omitted for brevity.
\end{proof}
Accordingly, for a fixed feasible candidate pair of pilot and common-data
powers \((P_{\text{c,r}},P_{\text{c,d}})\), Problem~\eqref{original Problem}
can be reformulated as follows:
\begin{subequations}
\label{common message, pilot given problem}
\begin{align}
    \max_{\pmb{P}_{\text{p}}, \pmb{C}} \quad
    & \sum_u\big( C_u + R_{\text{p},u}^{\text{s}}({P}_{\text{p},u}) \big) \\
    \text{s.t.} \quad 
    & \sum_u C_u =  R_{\text{c}}, \label{common rate_powergiven}\\
    & \sum_{u} P_{\text{p},u}
    = \frac{MNP_{\text{max}}-P_{\text{c,r}}-N_{\text{c}}P_{\text{c,d}}}{MN},
    \label{privatepower_rempower_const}\\
    & C_u \geq 0, \quad P_{\text{p},u} \geq 0, \quad \forall u,
    \label{Common,privatepower_positive_const}\\
    & \eqref{rate_threshold}. 
\end{align}
\end{subequations}
where \(R_{\text{c}}=\min_{u}R_{\text{c},u}^{\text{s}}\). For each fixed pair \((P_{\text{c,r}},P_{\text{c,d}})\),
Lemma~\ref{Lemma2} determines the total private-message power
\(P_{\text{p}}\) through \eqref{privatepower_rempower_const}.
Thus, the surrogate private rate of user \(u\) depends only on its
allocated private-message power \(P_{\text{p},u}\), and is denoted by
\(R_{\text{p},u}^{\text{s}}(P_{\text{p},u})\).

The reformulated problem remains nonconvex because the convex function
\(R_{\text{p},u}^{\text{s}}(P_{\text{p},u})\) appears in both the
objective function and the rate constraint in \eqref{rate_threshold}. To handle this nonconvexity, we apply SCA by linearizing $R_{\text{p},u}^{\text{s}}(P_{\text{p},u})$ around a reference point $\tilde{P}_{\text{p},u}$, yielding
\begin{align*}
&\tilde{R}_{\text{p},u}^{s}(P_{\text{p},u})
= \notag \\
&\frac{1}{MN}\!\sum_{a=1}^{MN}
\!\!\left(\frac{\kappa_u (P_{\text{p},u}-\tilde{P}_{\text{p},u})}{(-\kappa_u \tilde{P}_{\text{p},u} \!+\! \eta_{u,a})\!\ln2}\!
+\!
\log_2\!\left(
\frac{\eta_{u,a}}{-\kappa_u \tilde{P}_{\text{p},u}\!+\!\eta_{u,a}}
\right)\right).
\end{align*}
By substituting the first-order approximation $ \tilde{R}_{\text{p},u}^{s}({P}_{\text{p},u}) $ into the original problem, we obtain the following convex optimization problem:
\begin{subequations}
\label{original Problem_Taylor}
\begin{align}
&\max_{\pmb{P}_{\text{p}}, \pmb{C}} && 
\sum_u \left( C_u + \tilde{R}_{\text{p},u}^{s}({P}_{\text{p},u}) \right) \\
& \text{s.t.} && 
C_u + \tilde{R}_{\text{p},u}^{s}({P}_{\text{p},u}) \ge R^{\text{th}}, \quad \forall u, \label{rate_threshold_SCA} \\
&&& 
\eqref{common rate_powergiven}, \eqref{privatepower_rempower_const}, \eqref{Common,privatepower_positive_const}. 
\end{align}
\end{subequations}
Problem~\eqref{original Problem_Taylor} is convex and can be efficiently solved
using standard convex optimization solvers such as CVX~\cite{cvx}.

To optimize the pilot and common-data powers, we perform a 2D grid search
over finite power grids \(\mathcal{P}_{\text{c,r}}\) and
\(\mathcal{P}_{\text{c,d}}\), respectively.
For each feasible pair \((P_{\text{c,r}},P_{\text{c,d}})\), the SCA
procedure is initialized with
\(\tilde P_{\text{p},u}=P_{\text{p}}/U\), \(u\in\mathcal U\), and
Problem~\eqref{original Problem_Taylor} is iteratively solved by updating
\(\tilde P_{\text{p},u}\leftarrow P_{\text{p},u}\) until convergence.
The solution achieving the highest sum rate over all grid points is selected.
The resulting algorithm, referred to as \textbf{SCA-2D}, is summarized in
Algorithm~\ref{alg:2Dsearch}.
\begin{algorithm}[t]
\caption{\textbf{SCA-Based 2D Grid Search Algorithm} (\textbf{SCA-2D}).}
\label{alg:2Dsearch}
\begin{algorithmic}[1]
\STATE \textbf{Input:} $M$, $N$, $N_{\text{c}}$, $N_g$, $P_{\text{max}}$, $R^{\text{th}}$, $\sigma_n^2$, $\epsilon$, $\pmb{K}_{h_u,h_u}$, $\mathcal{P}_{\text{c,r}}$, and $\mathcal{P}_{\text{c,d}}$.
\STATE \textbf{Initialize:} \(R^{\text{best}}=P_{\text{c,r}}^\star=P_{\text{c,d}}^\star=0\) and 
\(\pmb{P}_{\text{p}}^\star=\pmb{C}^\star=\pmb{0}_{U}\).
\FOR{each $P_{\text{c,r}}\in\mathcal{P}_{\text{c,r}}$}
    \FOR{each $P_{\text{c,d}}\in\mathcal{P}_{\text{c,d}}$}
        \STATE Set $\tilde{P}_{\text{p},u}=\frac{MNP_{\text{max}}-P_{\text{c,r}}-N_{\text{c}}P_{\text{c,d}}}{MNU}$, $\forall u\in\mathcal{U}$.
        \STATE Calculate $\boldsymbol{\Sigma}_u$, $\kappa_u$, $\eta_{u,a}$, $R_c$ and $\rho_{u,a}$, $\forall u,a$.
        \STATE Initialize $R^{\text{old}}=0$ and $\delta=\infty$.
        \REPEAT
            \STATE Solve Problem~\eqref{original Problem_Taylor} using CVX.
            \IF{Problem~\eqref{original Problem_Taylor} is infeasible}
                \STATE Go to the next grid point.
            \ENDIF
            \STATE Compute $R^{\text{new}}=R^{\text{sum}}(\pmb{P}_{\text{p}},\pmb{C})$.
            \STATE Set $\delta=|R^{\text{new}}-R^{\text{old}}|/R^{\text{new}}$.
            \STATE Set $\tilde{P}_{\text{p},u}\leftarrow P_{\text{p},u}$, $\forall u$, and $R^{\text{old}}\leftarrow R^{\text{new}}$.
        \UNTIL{$\delta\le\epsilon$}
        \IF{$R^{\text{new}}>R^{\text{best}}$}
            \STATE Update $R^{\text{best}}\leftarrow R^{\text{new}}$, $P_{\text{c,r}}^\star\leftarrow P_{\text{c,r}}$, $P_{\text{c,d}}^\star\leftarrow P_{\text{c,d}}$, $\pmb{P}_{\text{p}}^\star\leftarrow\pmb{P}_{\text{p}}$, and $\pmb{C}^\star\leftarrow\pmb{C}$.
        \ENDIF
    \ENDFOR
\ENDFOR
\STATE \textbf{Output:} $P_{\text{c,r}}^\star$, $P_{\text{c,d}}^\star$, $\pmb{P}_{\text{p}}^\star$, and $\pmb{C}^\star$.
\end{algorithmic}
\end{algorithm}

We next analyze the computational complexity of \textbf{SCA-2D}.
At each 2D grid point, computing the surrogate-rate coefficients requires
\(\mathcal{O}(UMNQ)\) operations, while each SCA iteration requires
\(\mathcal{O}(UMN)\) operations for Taylor linearization and
\(\mathcal{O}((2U)^{3.5})\) for solving the convex subproblem.
Thus, the overall complexity is
\(
\mathcal{O}\left(
\frac{1}{\epsilon_r\epsilon_d}
\left[
UMNQ+
I_{\text{SCA}}\left(UMN+(2U)^{3.5}\right)
\right]
\right)\),
where $\epsilon_r$, $\epsilon_d$, and \(I_{\text{SCA}}\) denote the
pilot-power grid resolution, common-data-power grid resolution, and number
of SCA iterations, respectively.

\subsection{Low-Complexity Optimization Algorithm}
To further reduce the computational complexity of \textbf{SCA-2D}, we
propose a low-complexity method, referred to as \textbf{L-2D}, inspired
by~\cite{RSMA_power}. Like \textbf{SCA-2D}, \textbf{L-2D} performs a
2D search over \(P_{\text{c,r}}\) and \(P_{\text{c,d}}\). However, instead
of iterative convex optimization at each grid point, \textbf{L-2D}
determines the private-power and common-rate allocations analytically
using a conservative lower-bound approximation of the private-message rate.
The detailed procedure is presented below.

We first consider a fixed feasible pair
\((P_{\text{c,r}},P_{\text{c,d}})\). For this fixed pair,
Lemma~\ref{Lemma2} determines the total private-message power
\(P_{\text{p}}\), and the allocable common rate is given by
\(R_{\text{c}}=\min_{u}R_{\text{c},u}^{\text{s}}\).
To obtain a tractable lower bound on the surrogate private rate, we retain
only the common-data positions and upper-bound their effective interference,
as follows: 
\begin{align}
\label{surrogate private rate}
R_{\text{p},u}^{\text{s}}
&=
\frac{1}{MN}
\sum_{a=1}^{MN}
\log_2\!\left(
\frac{\eta_{u,a}}{\eta_{u,a}-\kappa_u P_{\text{p},u}}
\right) \notag \\
&\overset{(a)}{\geq}
\frac{1}{MN}
\sum_{a\in\mathcal{A}_{\text{data}}}
\log_2\!\left(
\frac{\eta_{u,a}}{\eta_{u,a}-\kappa_u P_{\text{p},u}}
\right) \notag \\
&\overset{(b)}{\geq}
\frac{N_{\text{c}}}{MN}
\log_2\!\left(
\frac{\eta_u}{\eta_u-\kappa_u P_{\text{p},u}}
\right)
=
\bar{R}_{\text{p},u}^{\text{s}},
\end{align}
where $\eta_u =
P_{\text{c,d}}\operatorname{Tr}(\mathbf{\Sigma}_u)
+\sigma_u^2P_{\text{p}}
+\sigma_n^2$. 
Here, \((a)\) follows by dropping the nonnegative terms outside
\(\mathcal{A}_{\text{data}}\).
For \((b)\), the guard region excludes the pilot position from
\(\mathcal{I}_u(a)\) for \(a\in\mathcal{A}_{\text{data}}\), so that
\(P_{\text{c}}(b)\leq P_{\text{c,d}}\) for all
\(b\in\mathcal{I}_u(a)\).
Using $\sum_{b\in\mathcal{I}_u(a)}
\Sigma_u^{\text{err}}(a,b)
=
\operatorname{Tr}(\mathbf{\Sigma}_u)$,  
we obtain \(\eta_{u,a}\leq\eta_u\).
Since
\(\log_2\!\bigl(\eta/(\eta-\kappa_uP_{\text{p},u})\bigr)\)
is decreasing in \(\eta\), this proves \((b)\). Thus, \(\bar{R}_{\text{p},u}^{\text{s}}\) provides a tractable lower bound
on \(R_{\text{p},u}^{\text{s}}\). Replacing
\(R_{\text{p},u}^{\text{s}}(P_{\text{p},u})\) in
\eqref{common message, pilot given problem} with
\(\bar{R}_{\text{p},u}^{\text{s}}\), we obtain the following problem:
\begin{subequations}
\label{reformulated_problem}
\begin{align}
\max_{\pmb{P}_{\text{p}},\pmb{C}} &\quad 
R_{\text{c}}+ \sum_u \bar{R}_{\text{p},u}^{s}({P}_{\text{p},u}) \label{L-2D_objective}\\
 \text{s.t.} \quad &  C_u +\bar{R}_{\text{p},u}^{s}({P}_{\text{p},u}) \ge R^{\text{th}}, \quad \forall u, \label{rate_threshold2} \\
& \sum_u C_u=  R_{\text{c}},  \label{common_mssage_allocaiton_equal}\\
& \sum_u P_{\text{p},u} =  {P_{\text{p}}}, \label{power_constraint_equal} \\
& \eqref{Common,privatepower_positive_const}. 
\end{align}
\end{subequations}

To derive the allocation structure, we first focus on the case
$R_{\text{c}}<UR^{\text{th}}$, where the common rate alone is
insufficient to satisfy all users' rate requirements. We first establish
the following lemma.

\begin{lemma}
\label{lem:common_rate_bound}
For \(R_{\text{c}}<UR^{\text{th}}\), Problem~\eqref{reformulated_problem}
can be restricted to \(0\le C_u\le R^{\text{th}}\),
\(\forall u\in\mathcal U\), without loss of optimality.
\end{lemma}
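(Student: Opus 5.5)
We argue by an exchange (redistribution) argument on the common-rate shares, keeping the private powers fixed. First note that in Problem~\eqref{reformulated_problem} the objective \eqref{L-2D_objective} does not depend on the split \(\pmb{C}\), because \(\sum_u C_u=R_{\text{c}}\) is fixed by \eqref{common_mssage_allocaiton_equal}. The vector \(\pmb{C}\) therefore enters only through the feasibility constraints \eqref{rate_threshold2}, \eqref{common_mssage_allocaiton_equal}, and \(C_u\ge 0\). It thus suffices to show the following: for any feasible \((\pmb{P}_{\text{p}},\pmb{C})\), there is a \(\pmb{C}'\) with \(0\le C_u'\le R^{\text{th}}\) for all \(u\) such that \((\pmb{P}_{\text{p}},\pmb{C}')\) is still feasible. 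Since the objective value is unchanged, restricting the feasible set in this way loses no optimality.

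Take a feasible point and let \(\mathcal{O}=\{u: C_u>R^{\text{th}}\}\) be the set of users holding more than \(R^{\text{th}}\). If \(\mathcal{O}=\emptyset\) there is nothing to do. Otherwise, compute the excess \(E=\sum_{u\in\mathcal{O}}(C_u-R^{\text{th}})>0\), and set \(C_u'=R^{\text{th}}\) for \(u\in\mathcal{O}\).

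For these clipped users, \eqref{rate_threshold2} still holds, since \(C_u'+\bar{R}_{\text{p},u}^{\text{s}}\ge R^{\text{th}}\) follows from \(\bar{R}_{\text{p},u}^{\text{s}}\ge 0\). This nonnegativity holds because \(\eta_u\ge\eta_u-\kappa_uP_{\text{p},u}>0\), using \(\kappa_u\ge 0\), which in turn follows from the MSE expression \eqref{eq:LMMSE_MSE} giving \((\sigma_{u,q}^{\text{err}})^2\le\sigma_{u,q}^2\).

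Next redistribute \(E\) among the users in \(\mathcal{U}\setminus\mathcal{O}\) without pushing anyone above \(R^{\text{th}}\). The total spare room for these users is
\[
\sum_{u\notin\mathcal{O}}(R^{\text{th}}-C_u)=UR^{\text{th}}-|\mathcal{O}|R^{\text{th}}-\sum_{u\notin\mathcal{O}}C_u .
\]
Using \(\sum_{u\notin\mathcal{O}}C_u=R_{\text{c}}-\sum_{u\in\mathcal{O}}C_u\), this spare room equals \(UR^{\text{th}}-R_{\text{c}}+E\), which is strictly greater than \(E\) by the hypothesis \(R_{\text{c}}<UR^{\text{th}}\). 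The excess can therefore be split, for instance greedily or proportionally to each user's spare room \(R^{\text{th}}-C_u\), so that every \(C_u'\) with \(u\notin\mathcal{O}\) stays in \([C_u,R^{\text{th}}]\).

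These users' common shares only increase, so their constraints \eqref{rate_threshold2} remain satisfied, nonnegativity is preserved, and \(\sum_uC_u'=R_{\text{c}}\). Thus \(\pmb{C}'\) is feasible with the same objective value, which proves the lemma.

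The argument is elementary. The only delicate point is verifying that the spare room suffices, and this is exactly where \(R_{\text{c}}<UR^{\text{th}}\) enters. A secondary check is that \(\bar{R}_{\text{p},u}^{\text{s}}\ge 0\), so that clipping a user's share to \(R^{\text{th}}\) keeps that user's rate constraint satisfied.
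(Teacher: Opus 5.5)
Your proposal is correct and uses essentially the same exchange argument as the paper. The paper shifts common rate pairwise from a user with \(C_u>R^{\text{th}}\) to one with \(C_v<R^{\text{th}}\) and repeats; your one-shot clip-and-redistribute construction, with the spare-room count \(UR^{\text{th}}-R_{\text{c}}+E>E\), does the same thing in a single step, and your check that \(\bar{R}_{\text{p},u}^{\text{s}}\ge 0\) (via \(\kappa_u\ge 0\)) justifies the paper's unproved assertion that the shift violates no rate constraint.
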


\begin{proof}
Consider an optimal solution of Problem~\eqref{reformulated_problem}. If it
already satisfies \(C_u\le R^{\text{th}}\) for all \(u\in\mathcal U\), the
claim follows. Otherwise, suppose that \(C_u>R^{\text{th}}\) for some user
\(u\). Since \(\sum_{w\in\mathcal U}C_w=R_{\text{c}}<UR^{\text{th}}\),
there exists a user \(v\neq u\) with \(C_v<R^{\text{th}}\). Shifting common
rate from \(u\) to \(v\) until either \(C_u=R^{\text{th}}\) or
\(C_v=R^{\text{th}}\) preserves the total common rate, private-message powers,
and objective value, without violating any rate constraints. Repeating this
operation yields an optimal solution with
\(0\le C_u\le R^{\text{th}}\), \(\forall u\in\mathcal U\).
\end{proof}
For a given \(0\le C_u\le R^{\text{th}}\), the minimum private-message power required for
user \(u\) to satisfy \(R^{\text{th}}\) is
$P_{\text{p},u}^{\text{min}}(C_u)
=\frac{1-\phi_u(C_u)}{\lambda_u}$,
where $
\lambda_u=\frac{\kappa_u}{\eta_u}$ and $
\phi_u(C_u)=2^{-MN(R^{\text{th}}-C_u)/N_{\text{c}}}$. 
The optimal allocations for Problem~\eqref{reformulated_problem}
are then characterized in the following theorem.

\begin{theorem}
\label{theorem2}
For a fixed candidate pair \((P_{\text{c,r}},P_{\text{c,d}})\) such that 
Problem~\eqref{reformulated_problem} is feasible and
\(0\le R_{\text{c}}<UR^{\text{th}}\), let
\((\pi_1,\pi_2,\ldots,\pi_U)\) be a permutation of \(\mathcal U\) such that
\(\lambda_{\pi_1}\le\lambda_{\pi_2}\le\cdots\le\lambda_{\pi_U}\).
Then, an optimal common-rate allocation is given by
\begin{align}
\label{common rate allocation optimal}
C_{\pi_j}^{\star}
=
\min\left\{
R^{\text{th}},
\left[R_{\text{c}}-(j-1)R^{\text{th}}\right]^+
\right\},
\quad \quad j=1,\ldots,U,
\end{align}
where $[x]^+=\max\{x,0\}$. The corresponding optimal private-power allocation is given by
\begin{align}
\label{optimal_joint_private_power}
P_{\text{p},\pi_j}^{\star}
&=
P_{\text{p},\pi_j}^{\text{min}}(C_{\pi_j}^{\star}),
\quad j=1,\ldots,U-1, \notag\\
P_{\text{p},\pi_U}^{\star}
&=
P_{\text{p}}
-
\sum_{j=1}^{U-1}
P_{\text{p},\pi_j}^{\text{min}}(C_{\pi_j}^{\star}).
\end{align}
\end{theorem}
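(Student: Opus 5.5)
The plan is to reduce Problem~\eqref{reformulated_problem} to a closed form in one scalar function and then run exchange arguments. For the fixed pair \((P_{\text{c,r}},P_{\text{c,d}})\), the total private power \(P_{\text{p}}\) is fixed, so \(\eta_u\), \(\kappa_u\), \(\lambda_u\) and \(R_{\text{c}}\) are constants. Hence
\[
\bar R_{\text{p},u}^{\text{s}}(P_{\text{p},u})=\tfrac{N_{\text{c}}}{MN}f_u(P_{\text{p},u}),\qquad f_u(P)=-\log_2(1-\lambda_uP).
\]
Each \(f_u\) is increasing and convex, with \(f_u'(P)=\lambda_u/((1-\lambda_uP)\ln 2)\). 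Inverting \(C_u+\bar R_{\text{p},u}^{\text{s}}=R^{\text{th}}\) gives \(P_{\text{p},u}^{\text{min}}(C_u)\). So constraint \eqref{rate_threshold2} is exactly \(P_{\text{p},u}\ge P_{\text{p},u}^{\text{min}}(C_u)\). By Lemma~\ref{lem:common_rate_bound} we may assume \(0\le C_u\le R^{\text{th}}\).

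\emph{Step 1 (private power for fixed \(\pmb C\)).} For fixed \(\pmb C\), the objective \(R_{\text{c}}+\frac{N_{\text{c}}}{MN}\sum_u f_u(P_{\text{p},u})\) is a convex function of \(\pmb P_{\text{p}}\). It is maximized over the simplex-like polytope \(\{P_{\text{p},u}\ge P^{\text{min}}_{\text{p},u}(C_u),\ \sum_uP_{\text{p},u}=P_{\text{p}}\}\). A convex function attains its maximum over a polytope at a vertex. Therefore, for some user \(v\), every \(u\neq v\) receives exactly \(P^{\text{min}}_{\text{p},u}(C_u)\) and \(v\) receives all remaining power. Substituting \(\bar R^{\text{s}}_{\text{p},u}=R^{\text{th}}-C_u\) for \(u\neq v\) and \(\sum_uC_u=R_{\text{c}}\), the objective becomes
\[
(U-1)R^{\text{th}}+C_v+\tfrac{N_{\text{c}}}{MN}f_v\Bigl(P_{\text{p}}-\textstyle\sum_{u\neq v}P^{\text{min}}_{\text{p},u}(C_u)\Bigr).
\]
It therefore remains to choose \(v\) and \(\pmb C\) jointly.

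\emph{Step 2 (structure of \(\pmb C\)).} The function \(P^{\text{min}}_{\text{p},u}(C)=(1-2^{-MN(R^{\text{th}}-C)/N_{\text{c}}})/\lambda_u\) is decreasing and concave in \(C\). For a fixed \(v\) and a fixed value of \(C_v\), the power consumed by the other users, \(\sum_{u\neq v}P^{\text{min}}_{\text{p},u}(C_u)\), must be minimized. This minimization is over the box \(0\le C_u\le R^{\text{th}}\) intersected with the hyperplane \(\sum_{u\neq v}C_u=R_{\text{c}}-C_v\). A concave function attains its minimum over this polytope at a vertex. At such a vertex every coordinate equals \(0\) or \(R^{\text{th}}\), except at most one fractional coordinate.

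Among vertices, a pairwise exchange settles the ordering. Take \(\lambda_i<\lambda_j\), and suppose \(j\) holds a rate block that \(i\) lacks. Swapping the full blocks changes the total power by \((1-\phi)(1/\lambda_j-1/\lambda_i)\)-type terms. This change favors giving common rate to the user with the smaller \(\lambda\), because the power saved by a unit of common rate scales as \(1/\lambda\). The same comparison must also be checked for the fractional coordinate. Hence the common rate is filled greedily in increasing order of \(\lambda\), which is \eqref{common rate allocation optimal} restricted to \(u\neq v\).

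\emph{Step 3 (choice of \(v\) and of \(C_v\)).} It remains to show that the optimal \(v\) is \(\pi_U\) and that \(C_v\) is the last to be filled. I would compare the candidate with \(v=\pi_U\) against any alternative \(v'\) by swapping the roles of \(v'\) and \(\pi_U\). Two effects must be combined. First, \(f_v\) has the largest slope and curvature when \(\lambda_v\) is largest. Second, the minimum power freed by letting \(\pi_U\) (rather than \(v'\)) receive the residual favors the same choice.

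For \(C_v\), note that moving common rate \(\delta\) from \(v\) to another user \(u\) does two things. It lowers \(C_v\) by \(\delta\), and it frees power of roughly \(\delta\cdot\frac{MN\ln2}{N_{\text{c}}}\phi_u/\lambda_u\). At the optimum this freed power is added to \(v\)'s residual, so the rate gain is at least \(\frac{N_{\text{c}}}{MN}f_v'\) times that power, which should be at least \(\delta\). This gain comes from \(f_v'\ge\lambda_v/\ln2\) together with \(\lambda_v\ge\lambda_u\) and the relation between \(\phi_u\) and the residual. It is this inequality, the exchange between \(v\) and the other users, that I expect to be the main obstacle, since \(\phi_u\) depends on \(C_u\). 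I would first try to bound it using \(1-\lambda_vP_{\text{p},v}\le 1\). Once it is established, feasibility gives \eqref{optimal_joint_private_power}, with the residual power going to \(\pi_U\).
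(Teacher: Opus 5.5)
\paragraph{Verdict.} There is a genuine gap, and it sits in Step 3. Your Steps 1 and 2 are essentially sound; Step 2 is the vertex-plus-swap argument the paper uses, restricted to $u\neq v$. Step 3, which you flag as the obstacle, is where the theorem is decided, and the local exchange route you sketch for it cannot work.

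\paragraph{Why the exchange argument fails.} The exact marginal gain from moving common rate $\delta$ from the residual user $v$ to a user $u$ with $C_u<R^{\text{th}}$ is $\delta\,\lambda_v\phi(C_u)/\bigl(\lambda_u(1-\lambda_vP_{\text{p},v})\bigr)$. This can be below $\delta$ at feasible points even when $v=\pi_U$, so no bound will rescue the inequality you want. For instance, take $U=2$, $\sqrt{\phi(0)}\,\lambda_2<\lambda_1\le\lambda_2$, $R_{\text{c}}=R^{\text{th}}$, and $P_{\text{p}}$ slightly above $(1-\phi(0))/\lambda_1$.
\begin{itemize}
\item At $\pmb C=(0,R^{\text{th}})$ with $v=2$ one has $P_{\text{p},2}\approx0$. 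The gain is then arbitrarily close to $\delta\lambda_2\phi(0)/\lambda_1<\delta\sqrt{\phi(0)}<\delta$.
\item The midpoint $C_1=C_2=R^{\text{th}}/2$ needs power $(1-\sqrt{\phi(0)})(1/\lambda_1+1/\lambda_2)>P_{\text{p}}$. So the feasible set of $\pmb C$ has two components, and no continuous exchange within it leads from $(0,R^{\text{th}})$ to $\pmb C^\star=(R^{\text{th}},0)$.
\end{itemize}
Your fallback bound $1-\lambda_vP_{\text{p},v}\le1$ is weaker still.

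Your ``second effect'' also points the wrong way. Define the $v$-independent quantity $P^{\text{rem}}(\pmb C)=P_{\text{p}}-\sum_{u\in\mathcal U}P^{\text{min}}_{\text{p},u}(C_u)$. For fixed $\pmb C$, the residual $P_{\text{p}}-\sum_{u\neq v}P^{\text{min}}_{\text{p},u}(C_u)$ exceeds $P^{\text{rem}}(\pmb C)$ by $(1-\phi(C_v))/\lambda_v$. This excess is smaller, not larger, when the residual user has the larger $\lambda_v$.

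\paragraph{The missing idea.} Keep $v$'s own minimum power inside $P^{\text{rem}}$. Since $1-\lambda_vP_{\text{p},v}=\phi(C_v)-\lambda_vP^{\text{rem}}(\pmb C)$, the explicit $+C_v$ in your Step 1 expression cancels against $-\frac{N_{\text c}}{MN}\log_2\phi(C_v)=R^{\text{th}}-C_v$. The objective becomes
\[
UR^{\text{th}}-\frac{N_{\text c}}{MN}\log_2\Bigl(1-\frac{\lambda_v}{\phi(C_v)}P^{\text{rem}}(\pmb C)\Bigr).
\]
Hence $v$ should maximize $\lambda_v/\phi(C_v)$, and the problem reduces to maximizing the product $\Gamma(\pmb C)=P^{\text{rem}}(\pmb C)\,\max_u\lambda_u/\phi(C_u)$. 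The paper then shows that $\pmb C^\star$ maximizes each factor separately:
\begin{itemize}
\item For $P^{\text{rem}}$, this is your Step 2 argument applied to all $U$ users rather than to $u\neq v$.
\item For the second factor, every feasible $\pmb C$ has $C_u\ge[R_{\text c}-(U-1)R^{\text{th}}]^+=C^\star_{\pi_U}$, while $\lambda_u\le\lambda_{\pi_U}$ and $\phi$ is increasing.
\end{itemize}
Both factors are nonnegative, so $\pmb C^\star$ with $v=\pi_U$ is globally optimal by direct comparison. This needs no marginal analysis and does not require the feasible set to be connected.
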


\begin{proof}
We first optimize the private-message powers for a fixed feasible
common-rate allocation $\pmb{C}$ satisfying
$0\le C_u\le R^{\text{th}}$.
Define $\phi(C)=2^{-MN(R^{\text{th}}-C)/N_{\text{c}}}$ and write
$\phi_u=\phi(C_u)$ and
$P_{\text{p},u}^{\text{min}}
=P_{\text{p},u}^{\text{min}}(C_u)$ for brevity.
The private-power constraints become
$P_{\text{p},u}\ge P_{\text{p},u}^{\text{min}}$ and
$\sum_u P_{\text{p},u}=P_{\text{p}}$.
Let
\(
P^{\text{rem}}(\pmb{C})
=
P_{\text{p}}-\sum_{u\in\mathcal U}P_{\text{p},u}^{\text{min}}
\ge0.
\)
Since the sum private rate is convex over this feasible polytope,
an optimal solution is attained at an extreme
point~\cite{boyd2004convex}.
Thus, all remaining private-message power can be assigned
to a single user $u'$, yielding
\begin{align}
P_{\text{p},u}^{\star}
=P_{\text{p},u}^{\text{min}},
\quad u\ne u', \quad
P_{\text{p},u'}^{\star}
=P_{\text{p},u'}^{\text{min}}
+P^{\text{rem}}(\pmb{C}).
\label{optimal private message power}
\end{align}
Using
$1-\lambda_u P_{\text{p},u}^{\text{min}}=\phi_u$
and
$-(N_{\text{c}}/MN)\log_2\phi_u=R^{\text{th}}-C_u$,
the resulting sum private rate is
\begin{align}
\sum_{u\in\mathcal U}\bar R_{\text{p},u}^{\text{s}}
=
UR^{\text{th}}-R_{\text{c}}
-\frac{N_{\text{c}}}{MN}
\log_2\!\left(
1-\frac{\lambda_{u'}}{\phi_{u'}}
P^{\text{rem}}(\pmb{C})
\right).
\label{sum_private_rate}
\end{align}
For fixed $\pmb{C}$, $P^{\text{rem}}(\pmb{C})$ is nonnegative
and independent of $u'$.
Since $-\log_2(1-x)$ is increasing in $x$,
one optimal choice is
\begin{align}
u'=
\arg\max_{u\in\mathcal U}
\frac{\lambda_u}{\phi(C_u)}.
\label{user selcection}
\end{align}
Define
\[
\Psi(\pmb{C})
=
\max_{u\in\mathcal U}\frac{\lambda_u}{\phi(C_u)},
\qquad
\Gamma(\pmb{C})
=
P^{\text{rem}}(\pmb{C})\Psi(\pmb{C}).
\]
By \eqref{user selcection} and $\phi_{u'}=\phi(C_{u'})$,
the selected user satisfies
$\lambda_{u'}/\phi_{u'}=\Psi(\pmb{C})$.
Thus,
$(\lambda_{u'}/\phi_{u'})P^{\text{rem}}(\pmb{C})
=\Gamma(\pmb{C})$.
Substituting this identity into \eqref{sum_private_rate},
and noting that $R_{\text{c}}$ is fixed, shows that
the remaining common-rate optimization is equivalent to
maximizing $\Gamma(\pmb{C})$ over the feasible
common-rate allocations.

Let $\pmb{C}^{\star}$ denote the candidate allocation in
\eqref{common rate allocation optimal}.
We next show that it maximizes both
$P^{\text{rem}}(\pmb{C})$ and $\Psi(\pmb{C})$.

First, observe that
\(P^{\text{rem}}(\pmb{C})
=
P_{\text{p}}
-\sum_{u\in\mathcal U}\frac{1}{\lambda_u}
+\sum_{u\in\mathcal U}\frac{\phi(C_u)}{\lambda_u}\). 
Consider maximizing this expression under only
\(\sum_u C_u=R_{\text{c}}\) and
\(0\le C_u\le R^{\text{th}}\), temporarily omitting
the private-power feasibility condition
\(P^{\text{rem}}(\pmb{C})\ge0\).
Since \(\phi\) is convex, a maximizer over this polytope
exists at an extreme point, where at most one component
lies strictly between \(0\) and \(R^{\text{th}}\).
Hence, letting
\(L=\lfloor R_{\text{c}}/R^{\text{th}}\rfloor\) and
\(r=R_{\text{c}}-LR^{\text{th}}\), the components of an
extreme-point maximizer satisfy, up to permutation,
\begin{align}
\label{extreme_common_rate_structure}
\{C_u\}_{u\in\mathcal U}
=
\{
\underbrace{R^{\text{th}},\ldots,R^{\text{th}}}_{L},
r,
\underbrace{0,\ldots,0}_{U-L-1}
\}.
\end{align}

Next, suppose that a feasible common-rate allocation \(\pmb{C}\) does not
satisfy the desired ordering, i.e.,
\(C_{\pi_i}<C_{\pi_j}\) for some \(i<j\).
Let \(\pmb{C}'\) be obtained by swapping these two entries. Then,
$P^{\text{rem}}(\pmb{C}')
\!-\!P^{\text{rem}}(\pmb{C})
\!=\!
\left(
\frac{1}{\lambda_{\pi_i}}
\!-\!\frac{1}{\lambda_{\pi_j}}
\right)
\left(
\phi(C_{\pi_j})\!-\!\phi(C_{\pi_i})
\right)
\ge0$, 
where the inequality follows from
\(\lambda_{\pi_i}\le\lambda_{\pi_j}\) and the monotonicity of \(\phi\).
Thus, the swap preserves feasibility while not decreasing the objective.
Repeating such swaps yields an ordered allocation with no smaller objective.
Therefore, there exists an optimal allocation satisfying
\(C_{\pi_1}\ge C_{\pi_2}\ge\cdots\ge C_{\pi_U}\).
Combining this ordering with the extreme-point structure in
\eqref{extreme_common_rate_structure} yields exactly
\(\pmb{C}^{\star}\) in \eqref{common rate allocation optimal}.

Second, we show that \(\pmb{C}^{\star}\) also maximizes
\(\Psi(\pmb{C})\). From the sum and box constraints, any feasible
common-rate allocation satisfies
\begin{align}
\label{common_rate_lower_bound}
C_u
\ge
\bigl[R_{\text{c}}-(U-1)R^{\text{th}}\bigr]^+
=
C_{\pi_U}^{\star},
\qquad \forall u\in\mathcal U,
\end{align}
where the equality follows from
\eqref{common rate allocation optimal} and
\(R_{\text{c}}<UR^{\text{th}}\).
Thus, \(C_{\pi_U}^{\star}\) is a common lower bound on the common-rate
portion of every user.
Since $\lambda_u\le\lambda_{\pi_U}$ and $\phi$ is increasing,
\[
\Psi(\pmb{C})
=
\max_u\frac{\lambda_u}{\phi(C_u)}
\le
\frac{\lambda_{\pi_U}}{\phi(C_{\pi_U}^{\star})}
=
\Psi(\pmb{C}^{\star}),
\]
where the upper bound is attained by user $\pi_U$
under $\pmb{C}^{\star}$.

Since both factors are nonnegative for feasible allocations,
the preceding inequalities yield
\(
\Gamma(\pmb{C})
\le
P^{\text{rem}}(\pmb{C}^{\star})
\Psi(\pmb{C}^{\star})
=
\Gamma(\pmb{C}^{\star}).
\)
Hence, $\pmb{C}^{\star}$ is an optimal common-rate allocation.
Moreover, since
$\lambda_{\pi_U}/\phi(C_{\pi_U}^{\star})
=\Psi(\pmb{C}^{\star})$,
user $\pi_U$ satisfies the selection rule in
\eqref{user selcection} under $\pmb{C}^{\star}$.
Hence, one can choose $u'=\pi_U$.
Substituting $u'=\pi_U$ into
\eqref{optimal private message power} gives the stated
private-power allocation.
\end{proof}
We next consider the complementary case $R_{\text{c}}\ge UR^{\text{th}}$,
where the common rate alone can satisfy all users' rate requirements.

\begin{theorem}
\label{theorem3}
For a fixed feasible candidate pair $(P_{\text{c,r}},P_{\text{c,d}})$
with $R_{\text{c}}\ge UR^{\text{th}}$, let $\pi_U$ be a user with the
largest $\lambda_u$. Then, one optimal solution of
Problem~\eqref{reformulated_problem} is given by
\begin{align}
C_u^{\star}=\frac{R_{\text{c}}}{U}, \quad \forall u\in\mathcal U, \quad P_{\text{p},\pi_U}^{\star}=P_{\text{p}},\quad
P_{\text{p},u}^{\star}=0, \quad \forall u\ne\pi_U.
\notag
\end{align}
\end{theorem}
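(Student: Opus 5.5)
We prove the statement by showing that the proposed point is feasible and that no feasible point of Problem~\eqref{reformulated_problem} can give a larger objective. Since $R_{\text{c}}$ is fixed for the given pair $(P_{\text{c,r}},P_{\text{c,d}})$, the objective \eqref{L-2D_objective} depends on $\pmb{C}$ only through the equality \eqref{common_mssage_allocaiton_equal}. The task therefore reduces to maximizing $\sum_u \bar R_{\text{p},u}^{\text{s}}(P_{\text{p},u})$ over $\pmb{P}_{\text{p}}$, subject to the simplex constraint \eqref{power_constraint_equal} and the rate constraints \eqref{rate_threshold2}.

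\textbf{Step 1: relax the rate constraints.} Drop \eqref{rate_threshold2} and solve the relaxed problem
\begin{align}
\max_{\pmb{P}_{\text{p}}}\ \sum_u \frac{N_{\text{c}}}{MN}\log_2\!\left(\frac{1}{1-\lambda_u P_{\text{p},u}}\right)
\quad\text{s.t.}\quad \sum_u P_{\text{p},u}=P_{\text{p}},\ P_{\text{p},u}\ge 0. \notag
\end{align}
Here we write $\bar R_{\text{p},u}^{\text{s}}$ using $\lambda_u=\kappa_u/\eta_u$, where $\eta_u$ is fixed once $P_{\text{p}}$ is fixed.

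\textbf{Step 2: solve the relaxation.} Each term $-\log_2(1-\lambda_u x)$ is convex in $x$ on its domain, so the objective is convex over the simplex. As in the proof of Theorem~\ref{theorem2}, a maximizer is therefore attained at a vertex, i.e., $P_{\text{p},u}=P_{\text{p}}$ for a single user $u$ and zero for all others. The vertex value is $-\frac{N_{\text{c}}}{MN}\log_2(1-\lambda_u P_{\text{p}})$, which is increasing in $\lambda_u$. Hence choosing $u=\pi_U$, the user with the largest $\lambda_u$, is optimal for the relaxed problem.

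We must also check that $1-\lambda_{\pi_U}P_{\text{p}}>0$, so that the vertex lies in the domain. This follows because $\eta_u\ge\sigma_u^2P_{\text{p}}+\sigma_n^2>\kappa_uP_{\text{p}}$, using $\kappa_u\le\sigma_u^2$.

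\textbf{Step 3: feasibility in the original problem.} Set $C_u^\star=R_{\text{c}}/U\ge R^{\text{th}}$. Since $\bar R_{\text{p},u}^{\text{s}}\ge 0$ for every $P_{\text{p},u}\ge 0$, each user satisfies $C_u^\star+\bar R_{\text{p},u}^{\text{s}}\ge R^{\text{th}}$, so \eqref{rate_threshold2} holds. The constraints \eqref{common_mssage_allocaiton_equal}, \eqref{power_constraint_equal} and nonnegativity hold by construction. The relaxed optimum is thus feasible for Problem~\eqref{reformulated_problem}, and so it is optimal there.

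The main obstacle is Step 2. It requires confirming the convexity of $\bar R_{\text{p},u}^{\text{s}}$ in $P_{\text{p},u}$ with $\eta_u$ held fixed, and confirming the domain condition so that the vertex argument is valid. Both follow from the facts that $\eta_u$ depends only on the fixed total $P_{\text{p}}$ and that $\kappa_u\le\sigma_u^2$. Ties in the largest $\lambda_u$ are harmless, since any maximizing user may be chosen as $\pi_U$.
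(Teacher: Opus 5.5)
Your proof is correct and follows essentially the same route as the paper. Setting $C_u^\star=R_{\text{c}}/U\ge R^{\text{th}}$ makes the rate constraints inactive; the sum private rate is convex over the simplex, so a vertex is optimal; and the vertex value $-\frac{N_{\text{c}}}{MN}\log_2(1-\lambda_uP_{\text{p}})$ is nondecreasing in $\lambda_u$. Your explicit relaxation-then-feasibility framing and your domain check $\eta_u>\kappa_uP_{\text{p}}$ simply make rigorous what the paper's shorter argument leaves implicit.
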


\begin{proof}
Since \(C_u^{\star}=R_{\text{c}}/U\ge R^{\text{th}}\), all users satisfy the
rate threshold through the common rate alone. Thus, the private-message power
allocation reduces to maximizing the sum private rate over the simplex
\(\sum_uP_{\text{p},u}=P_{\text{p}}\), \(P_{\text{p},u}\ge0\). Since this objective is convex over the simplex, an extreme-point solution is
optimal, with all private-message power assigned to a single user \(u\). The resulting sum private rate then reduces to
\(-\frac{N_{\text{c}}}{MN}\log_2(1-\lambda_uP_{\text{p}})\), which is
nondecreasing in \(\lambda_u\). Hence, choosing a user \(\pi_U\) with the
largest \(\lambda_u\) gives one optimal solution.
\end{proof}
The optimal allocations in Theorems~\ref{theorem2} and~\ref{theorem3}
require ordering the users according to \(\lambda_u\).
Since \(\lambda_u\) depends on \(P_{\text{c,r}}\) and \(P_{\text{c,d}}\),
this ordering may vary across grid points.
To avoid such grid-dependent reordering, we introduce the following lemma
based on long-term channel statistics.
\begin{lemma}
\label{Lemma3}
Let \((\tilde{\pi}_1,\ldots,\tilde{\pi}_U)\) denote the user ordering
according to the long-term channel powers \(\{\sigma_u^2\}\), i.e., 
$\sigma_{\tilde{\pi}_1}^2
\le \sigma_{\tilde{\pi}_2}^2
\le \cdots
\le \sigma_{\tilde{\pi}_U}^2$.
Under the high-pilot-SNR condition\footnote{This condition is consistent with the operating regime of
interest, since reliable data detection requires sufficiently accurate
channel estimation.}
$P_{\text{c,r}}\sigma_{u,q}^2
\gg \tilde{\sigma}_{n,u}^2,
\forall u,q$,
\(\lambda_u\) is approximately increasing in \(\sigma_u^2\).
Hence, the ordering induced by \(\{\lambda_u\}\) can be approximated by
the long-term channel-power ordering
\((\tilde{\pi}_1,\ldots,\tilde{\pi}_U)\).
In particular, \(\tilde{\pi}_U\) approximately identifies the user with
the largest \(\lambda_u\).
\end{lemma}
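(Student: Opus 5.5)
Under high pilot SNR, the estimation error becomes negligible next to the channel gain, so $\lambda_u=\kappa_u/\eta_u$ reduces to an increasing function of $\sigma_u^2$ alone. The plan is to expand both the numerator and the denominator of $\lambda_u$ to first order in the small ratio $\tilde{\sigma}_{n,u}^2/(P_{\text{c,r}}\sigma_{u,q}^2)$, and then study the monotonicity of the leading term.

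\emph{Step 1.} Start from \eqref{eq:LMMSE_MSE}. Under $P_{\text{c,r}}\sigma_{u,q}^2\gg\tilde{\sigma}_{n,u}^2$,
\begin{align}
(\sigma_{u,q}^{\text{err}})^2=\frac{\tilde{\sigma}_{n,u}^2\sigma_{u,q}^2}{P_{\text{c,r}}\sigma_{u,q}^2+\tilde{\sigma}_{n,u}^2}\approx\frac{\tilde{\sigma}_{n,u}^2}{P_{\text{c,r}}}. \notag
\end{align}
Summing over the $Q$ taps gives
\begin{align}
\operatorname{Tr}(\pmb{\Sigma}_u)\approx\frac{Q\tilde{\sigma}_{n,u}^2}{P_{\text{c,r}}}=\frac{Q(P_{\text{p}}\sigma_u^2+\sigma_n^2)}{P_{\text{c,r}}}. \notag
\end{align}
This step uses the equal-$Q$ assumption. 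The trace is small relative to $\sigma_u^2$, because $\operatorname{Tr}(\pmb{\Sigma}_u)\le\sum_q\tilde{\sigma}_{n,u}^2/P_{\text{c,r}}\ll\sum_q\sigma_{u,q}^2=\sigma_u^2$ termwise.

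\emph{Step 2.} Substitute this into $\kappa_u=\sigma_u^2-\operatorname{Tr}(\pmb{\Sigma}_u)$ and $\eta_u=P_{\text{c,d}}\operatorname{Tr}(\pmb{\Sigma}_u)+\sigma_u^2P_{\text{p}}+\sigma_n^2$. To leading order, $\kappa_u\approx\sigma_u^2$. Writing $\epsilon=Q/P_{\text{c,r}}$, the denominator becomes
\begin{align}
\eta_u\approx(1+P_{\text{c,d}}\epsilon)(P_{\text{p}}\sigma_u^2+\sigma_n^2). \notag
\end{align}
The factor $(1+P_{\text{c,d}}\epsilon)$ is common to all users, since $P_{\text{p}}$, $P_{\text{c,d}}$, $P_{\text{c,r}}$ and $\sigma_n^2$ are shared. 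Hence
\begin{align}
\lambda_u\approx\frac{1}{1+P_{\text{c,d}}Q/P_{\text{c,r}}}\cdot\frac{\sigma_u^2}{P_{\text{p}}\sigma_u^2+\sigma_n^2}. \notag
\end{align}
Keeping the first-order correction in the numerator, $\kappa_u\approx\sigma_u^2(1-\epsilon P_{\text{p}})-\epsilon\sigma_n^2$, gives a function of the form $(A\sigma_u^2-B)/(C\sigma_u^2+D)$. Its derivative has numerator $AD+BC>0$, so the expression remains increasing in $\sigma_u^2$ with user-independent constants.

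\emph{Step 3.} The map $s\mapsto s/(P_{\text{p}}s+\sigma_n^2)$ has derivative $\sigma_n^2/(P_{\text{p}}s+\sigma_n^2)^2>0$, so it is strictly increasing. Therefore the ordering of $\{\lambda_u\}$ approximately coincides with the ordering of $\{\sigma_u^2\}$, that is, with $(\tilde{\pi}_1,\ldots,\tilde{\pi}_U)$, and $\tilde{\pi}_U$ identifies the user with the largest $\lambda_u$.

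\emph{Main obstacle.} The approximation in Step 1 makes $(\sigma_{u,q}^{\text{err}})^2$ depend on $u$ only through $\sigma_u^2$, and not through the individual tap profile $\sigma_{u,q}^2$. This is what makes the ordering depend only on the aggregate power. Without the high-SNR condition, two users with equal $\sigma_u^2$ but different power profiles would have different traces, so the reduction to a single scalar would fail. The step therefore has to be stated explicitly as an approximation, valid up to $O\bigl((\tilde{\sigma}_{n,u}^2/(P_{\text{c,r}}\sigma_{u,q}^2))^2\bigr)$ corrections. Users whose $\sigma_u^2$ are nearly equal may be ordered differently once these corrections are included; for such near-ties the claim holds only approximately, which is why the lemma says "approximately".
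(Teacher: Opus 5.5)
Your proposal is correct and follows essentially the same route as the paper. Like the paper, you approximate $\operatorname{Tr}(\pmb{\Sigma}_u)\approx Q\tilde{\sigma}_{n,u}^2/P_{\text{c,r}}$, substitute it into $\kappa_u/\eta_u$, and check by differentiation that the resulting linear-fractional expression is increasing in $\sigma_u^2$. Your factorization $\eta_u\approx(1+P_{\text{c,d}}Q/P_{\text{c,r}})(P_{\text{p}}\sigma_u^2+\sigma_n^2)$ and your derivative numerator $AD+BC$, which simplifies to $(1+P_{\text{c,d}}Q/P_{\text{c,r}})\sigma_n^2>0$, just make explicit the step the paper calls ``readily verified by differentiation.''
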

\begin{proof}
When $ P_{\text{c,r}}\sigma_{u,q}^2 \gg \tilde{\sigma}_{n,u}^2 $ for all $q$,
\(
\operatorname{Tr}(\boldsymbol{\Sigma}_{u})
=
\sum_q
\frac{\tilde{\sigma}_{n,u}^2 \sigma_{u,q}^2}{P_{\text{c,r}}\sigma_{u,q}^2+\tilde{\sigma}_{n,u}^2}
\approx
Q\frac{\tilde{\sigma}_{n,u}^2}{P_{\text{c,r}}}
=
Q\frac{\sigma_u^2 P_{\text{p}}+\sigma_n^2}{P_{\text{c,r}}}.
\)
As a result, the ratio $ \lambda_u=\frac{\kappa_u}{\eta_u} $ can be approximated as 
\[
\lambda_u \!=\!
\frac{
\sigma_{u}^2 - \operatorname{Tr}(\boldsymbol{\Sigma}_{u})
}{
\operatorname{Tr}(\boldsymbol{\Sigma}_{u}) P_{\text{c,d
}}
\!+\! \tilde{\sigma}_{n,u}^2
} \!\approx \!\frac{
\sigma_{u}^2 (1- Q\frac{P_{\text{p}}}{P_{\text{c,r}}})-Q\frac{\sigma_n^2}{P_{\text{c},r}}}{
\sigma_{u}^2 (P_{\text{p}}+Q\frac{P_{\text{p}}P_{\text{c,d}}}{P_{\text{c,r}}}) \!+\! \sigma_n^2+Q\frac{\sigma_n^2P_{\text{c,d}}}{P_{\text{c,r}}}
}.
\]
It can be readily verified by differentiation that the above approximation of
\(\lambda_u\) is strictly increasing with respect to \(\sigma_u^2\). Hence,
under the high-pilot-SNR approximation, the ordering of \(\lambda_u\) follows
the ordering of \(\sigma_u^2\). 
\end{proof}

Together with Lemma~\ref{Lemma3}, Theorems~\ref{theorem2}
and~\ref{theorem3} reveal a simple allocation structure based on the
long-term channel-gain ordering. 
When \(R_{\text{c}}<UR^{\text{th}}\), the common rate is sequentially
allocated from weaker to stronger users, up to \(R^{\text{th}}\) per user.
Each user is then assigned the minimum private-message power required to meet \(R^{\text{th}}\), with the remaining private-message power allocated to the strongest user.
When \(R_{\text{c}}\ge UR^{\text{th}}\), the common rate is equally divided among the users, and all private-message power is allocated to the strongest user.

Based on this allocation structure,
\textbf{L-2D} first orders the users according to their long-term channel
gains and uses this ordering throughout the two-dimensional search.
For each candidate pair \((P_{\text{c,r}},P_{\text{c,d}})\),
\(R_{\text{c}}\) is computed, and the common-rate and private-message power
allocations are determined depending on whether
\(R_{\text{c}}<UR^{\text{th}}\) or \(R_{\text{c}}\ge UR^{\text{th}}\).
The candidate pair that maximizes the objective in
\eqref{L-2D_objective} is then selected. 
Since \textbf{L-2D} follows the same two-dimensional search procedure as
\textbf{SCA-2D}, with the iterative SCA step replaced by the closed-form
allocation rules, its pseudocode is omitted for brevity.

We next analyze the computational complexity of \textbf{L-2D}.
The dominant per-grid-point cost is the computation of the
position-dependent surrogate-rate coefficients, with complexity
\(\mathcal{O}(UMNQ)\).
Hence, the overall complexity is
\(\mathcal{O}\left(\frac{UMNQ}{\epsilon_d\epsilon_r}\right).
\)
Compared with \textbf{SCA-2D}, \textbf{L-2D} eliminates the iterative
SCA optimization at each grid point, substantially reducing the overall
computational complexity.
\begin{remark}
Although the proposed power-allocation algorithms are developed for the
GS configuration, they can be extended to GG, SG, and SS by updating the
configuration-dependent channel-estimation error terms and rate expressions.
These modifications affect only the corresponding coefficients, while
preserving the optimization structure; hence, the closed-form allocation
structures in Theorems~2 and~3 remain valid. Likewise, when the high-pilot-SNR condition holds for each configuration,
Lemma~4 allows the user ordering to be approximated by the same long-term
channel-power ordering. Thus, the resulting structural
and physical interpretations apply broadly across the considered OTFS-RSMA
configurations.
\end{remark}


\begin{table*}[t]
\centering
\caption{Optimized power allocation for \textbf{SCA-2D} and \textbf{L-2D} at different SNR values, with $P_{\text{max}}=1$ W and $R^{\text{th}}=0.5$ bps/Hz.}
\label{tab:power_alloc_bis_tay}
\begin{tabular}{c|ccccc|ccccc}
\hline
\multirow{2}{*}{SNR (dB)} 
& \multicolumn{5}{c|}{\textbf{SCA-2D}} 
& \multicolumn{5}{c}{\textbf{L-2D}} \\
& $P_{\text{c,d}}$ (W) & $P_{\text{c,r}}$ (W) & $P_{\text{p},1}$ (W) & $P_{\text{p},2}$ (W) & $P_{\text{p},3}$ (W)
& $P_{\text{c,d}}$ (W) & $P_{\text{c,r}}$ (W) & $P_{\text{p},1}$ (W) & $P_{\text{p},2}$ (W) & $P_{\text{p},3}$ (W) \\
\hline
20 & 0.6 & 395 & 0.2871 & 0 & 0 & 0.7 & 370 & 0.2126 & 0 & 0 \\
22 & 0.6 & 475 & 0.2480 & 0 & 0 & 0.8 & 355 & 0.1333 & 0 & 0 \\
24 & 0.7 & 475 & 0.1614 & 0 & 0 & 0.8 & 420 & 0.1016 & 0 & 0 \\
26 & 0.8 & 435 & 0.0942 & 0 & 0 & 0.9 & 345 & 0.0515 & 0 & 0 \\
\hline
\end{tabular}
\end{table*}

\begin{figure}[t]
    \centering
    \subfloat[]{
        \includegraphics[width=0.23\textwidth]{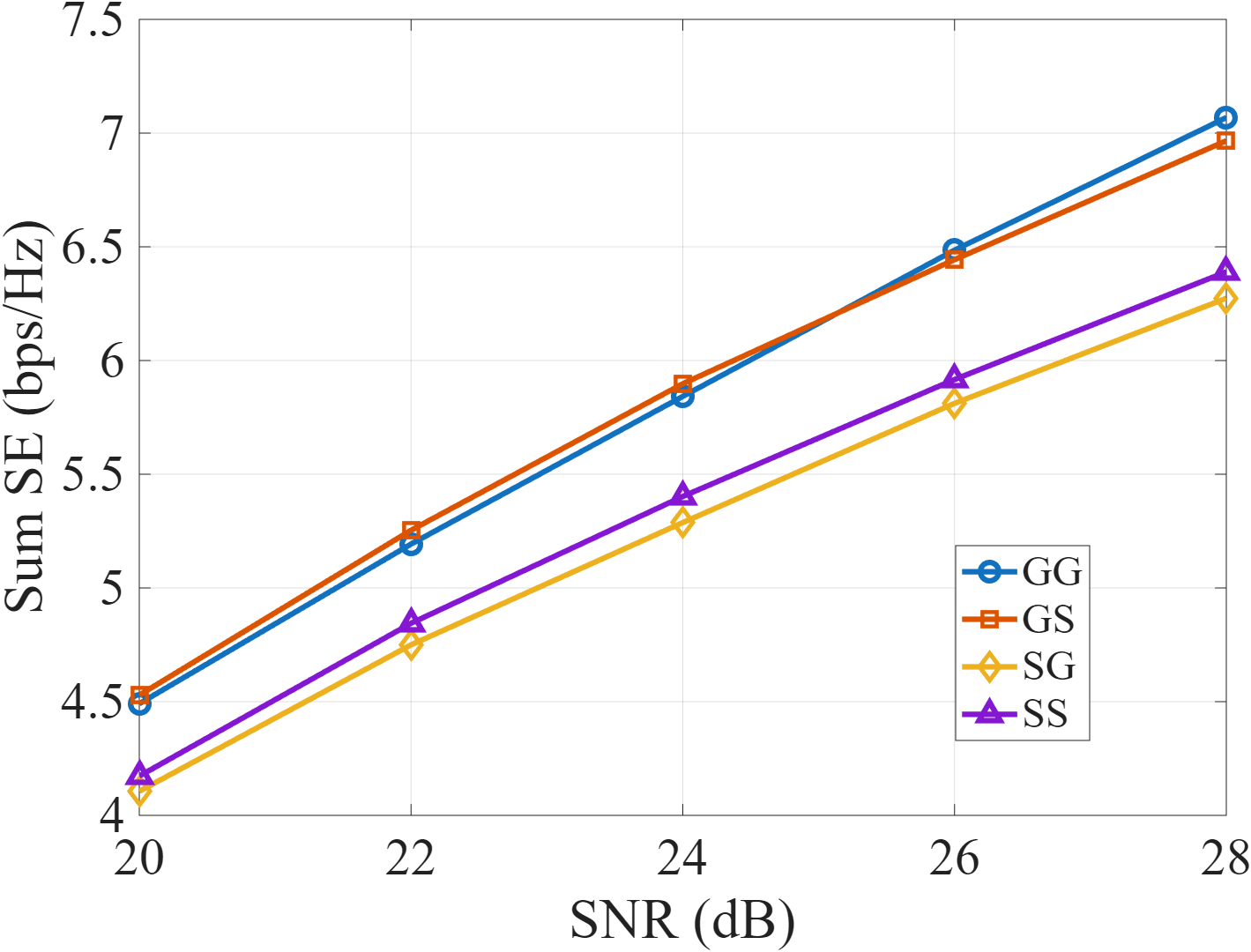}%
        \label{fig: GS_SNRvar}%
    }\hfill
    \subfloat[]{
        \includegraphics[width=0.22\textwidth]{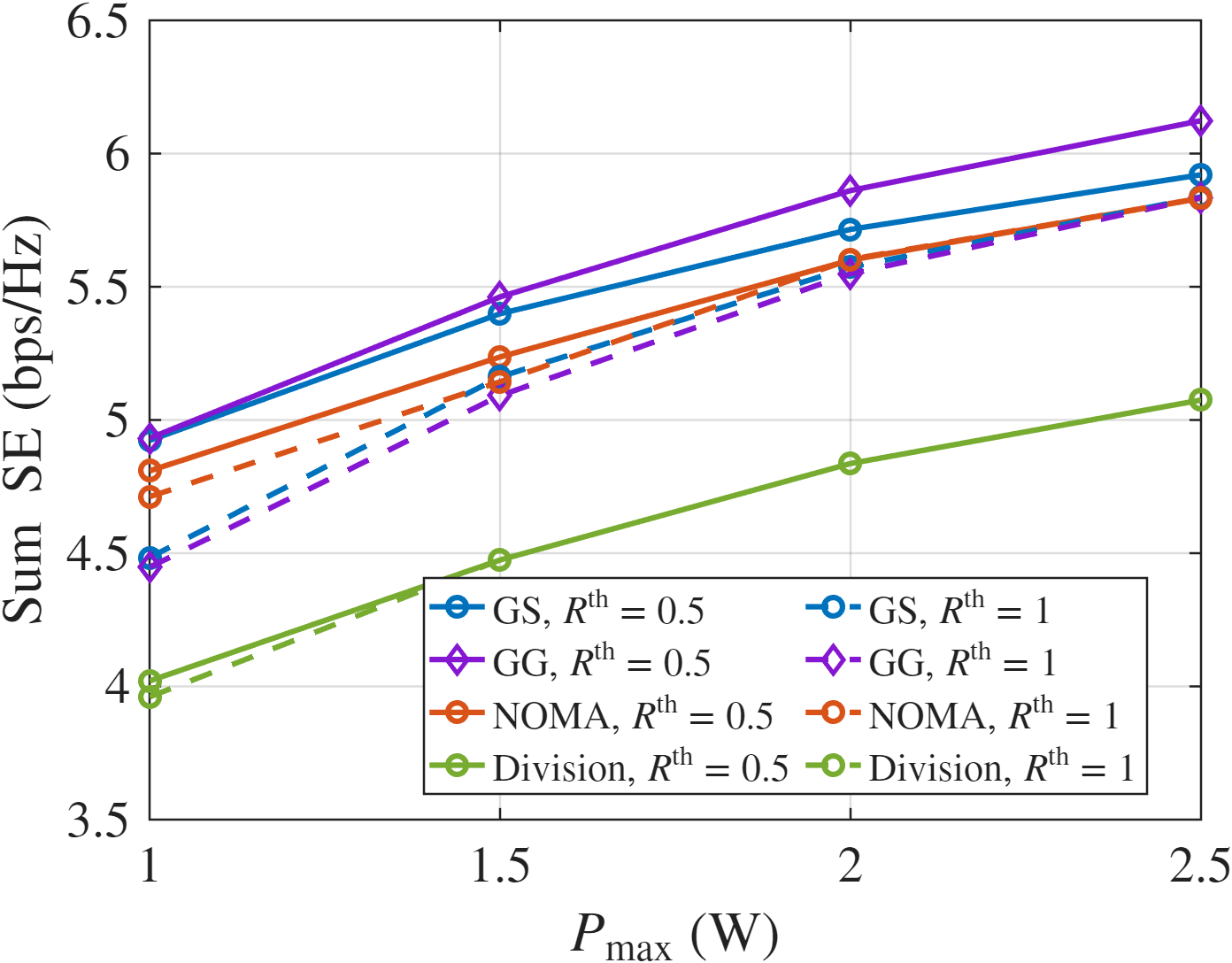}%
        \label{fig:RSMA_NOMA_divison_rate_compare}%
    }
\caption{Sum-SE performance: (a) comparison of pilot transmission
configurations versus SNR with $P_{\max}=1$~W and
$R^{\mathrm{th}}=1$~bps/Hz; (b) comparison of OTFS-RSMA
with OTFS-NOMA and orthogonal division at SNR $=20$~dB.}
    \label{GS_compare}
\end{figure}
 \begin{figure}[t]
    \centering
    \includegraphics[width=4cm]{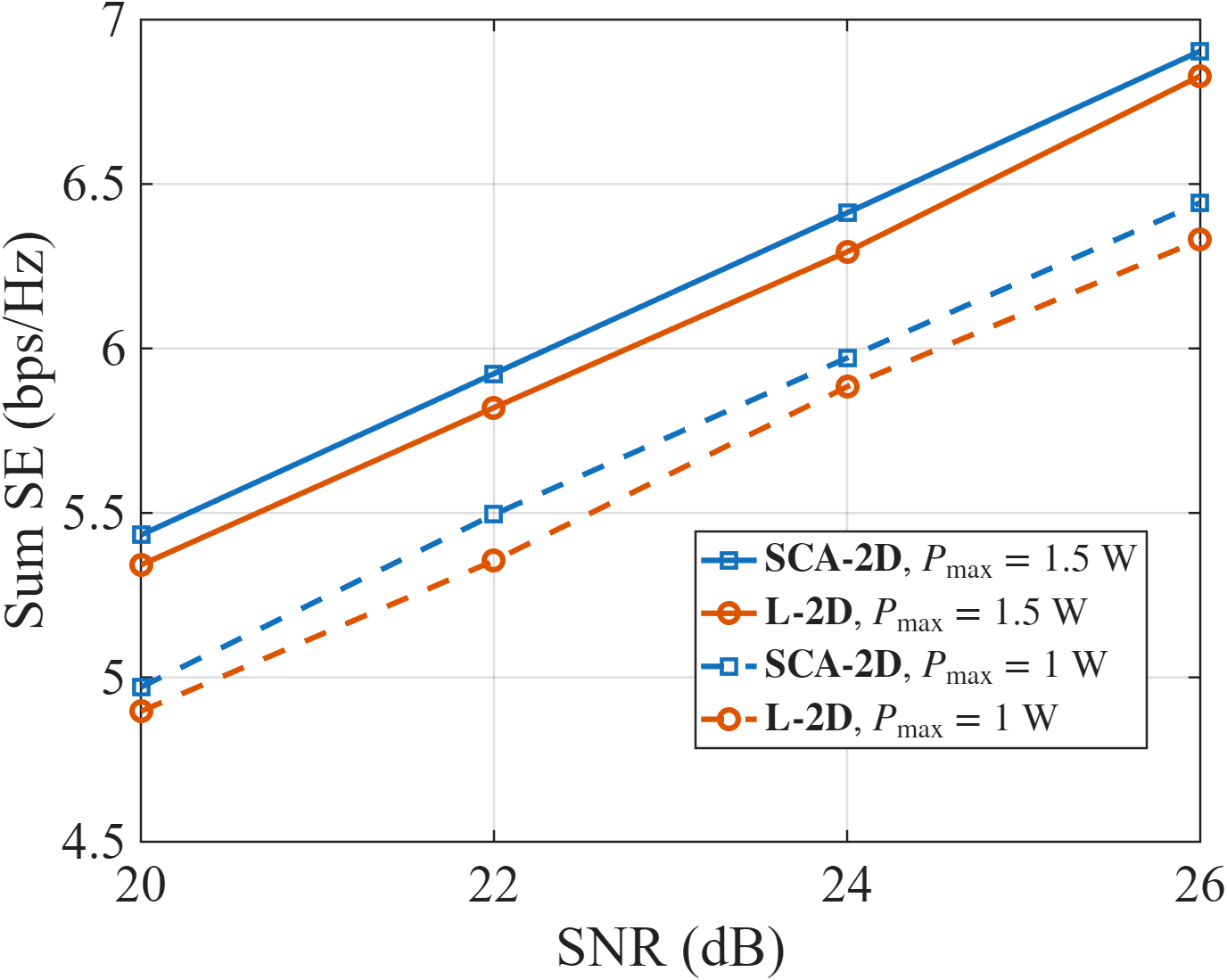}
\caption{{Sum SE versus SNR for SCA-2D and L-2D under the GS configuration
with $R^{\mathrm{th}}=0.5$ bps/Hz.}}
    \label{fig:Algorithm1 Algorithm2 compare graph}
\end{figure}
\section{Numerical Results}
We present numerical results to evaluate the proposed framework.
Unless otherwise specified, we consider \(U=3\) users, each with
\(Q=4\) propagation paths~\cite{OTFS_superimposed0}.
The overall channel powers of Users~2 and~3 are set to \(-3\) dB and
\(-9\) dB relative to that of User~1, respectively.
For each user, the variance of the path with the smallest delay is
normalized to \(0\) dB, and those of the remaining paths are determined
from their relative delays following
\cite{OTFS_channel_est1,OTFS_superimposed1}.
The OTFS frame consists of \(M=64\) delay bins and \(N=32\) Doppler bins,
with carrier frequency \(f_c=3.5\) GHz and subcarrier spacing
\(\Delta f=15\) kHz.
The maximum delay and Doppler indices are
\(l_{\max}=10\) (\(\tau_{\max}=10~\mu\text{s}\)) and
\(k_{\max}=3\) (\(v_{\max}=434~\text{km/h}\)), respectively
\cite{OTFS-NOMA}.
The noise variance is set as
\(\sigma_n^2=10^{-\text{SNR}/10}\); for example,
\(\text{SNR}=20\) dB gives \(\sigma_n^2=10^{-2}\).
The grid resolutions are \(\epsilon_d=0.1\) and \(\epsilon_r=5\), and
\(N_f=10^3\) Monte Carlo realizations are used for BER and channel-estimation
evaluation.
Because the surrogate rates used for optimization may differ from the
actual rates, we impose a 20\% margin on the minimum-rate requirement by
replacing \(R^{\text{th}}\) with \(1.2R^{\text{th}}\) during
optimization.
The reported \(R^{\text{th}}\) denotes the target rate without this margin, and its unit (bps/Hz) is omitted from the figure legends for brevity.


\subsection{Sum-SE Performance}
{We first compare the sum SE of the four pilot transmission configurations using \textbf{SCA-2D} 
in Fig.~\ref{fig: GS_SNRvar}.
Across the considered settings, GG and GS outperform SG and SS, indicating
that employing a guard-based structure for the common message is beneficial
by suppressing common-data interference during channel estimation.
Between GG and GS, however, neither configuration uniformly outperforms
the other.
GS allows the private messages to reuse the guard resources at the cost of
additional private-message interference during channel estimation, whereas
GG eliminates this interference by also guarding the private messages but
sacrifices the corresponding transmission resources.
Consequently, GS is preferable when the resource-utilization gain outweighs
the interference penalty, while GG becomes preferable as the impact of
private-message interference becomes more significant.} Fig.~\ref{fig:RSMA_NOMA_divison_rate_compare} compares the sum SE of
OTFS-RSMA using \textbf{SCA-2D} with those of the OTFS-NOMA~\cite{OTFS-NOMA} and
orthogonal-division~\cite{OTFS_channel_est1} benchmarks for different
\(P_{\max}\) and \(R^{\text{th}}\). For OTFS-RSMA, we consider the GG and GS
configurations because Fig.~\ref{fig: GS_SNRvar} shows that they achieve the
highest sum-SE performance among the four configurations, with their relative
performance depending on the operating conditions. In OTFS-NOMA, all users
share the entire DD grid through superposition, with the pilot and guard
symbols embedded in the weakest user's message, whereas orthogonal division
partitions the DD grid among users.
As shown in the figure, both RSMA and NOMA outperform orthogonal division,
highlighting the benefit of non-orthogonal resource sharing. RSMA outperforms
NOMA for smaller \(R^{\mathrm{th}}\) by serving weaker users through the common
message while concentrating private-message power on the strongest user,
whereas, for larger $R^{\mathrm{th}}$, NOMA achieves comparable
or slightly higher sum SE, benefiting from multi-stage SIC.
Within RSMA, GG is preferable for smaller \(R^{\mathrm{th}}\), where the larger private-message power can lead to stronger private-message interference during channel estimation. For larger \(R^{\mathrm{th}}\), the reduced private-message power mitigates this interference, making the spectral-efficiency benefit of GS more pronounced. Overall, RSMA achieves competitive performance relative to NOMA while requiring fewer SIC stages.

{We next compare \textbf{SCA-2D} and \textbf{L-2D} to evaluate the
performance--complexity tradeoff of \textbf{L-2D}.
Since the focus here is on the comparison between the two algorithms,
we present the results for the GS configuration, while similar trends
are observed for the other configurations.
As shown in Table~\ref{tab:power_alloc_bis_tay}, both algorithms yield
similar power-allocation patterns.
In particular, they allocate substantially more power to the pilot than
to the private messages, supporting the high-pilot-SNR assumption in
Lemma~\ref{Lemma3}.
Note that \(P_{\max}\) is the average power budget per DD resource element,
whereas \(P_{\text{c,r}}\) is the power of a single pilot symbol; hence,
\(P_{\text{c,r}}>P_{\max}\) does not violate the frame-average power
constraint.
Both algorithms also concentrate the private-message power on User~1,
the strongest user, consistent with the optimal allocation structure
characterized in Theorems~\ref{theorem2} and~\ref{theorem3}.
As shown in Fig.~\ref{fig:Algorithm1 Algorithm2 compare graph},
\textbf{L-2D} incurs less than a \(4\%\) actual sum-SE loss relative to
\textbf{SCA-2D}, while reducing the average computation time from
\(1382.5\) s to \(0.308\) s at \(P_{\max}=1.5\) W on an AMD Ryzen 9950X3D processor.

\begin{figure*}[t]
\centering
        \includegraphics[width=12.1cm]{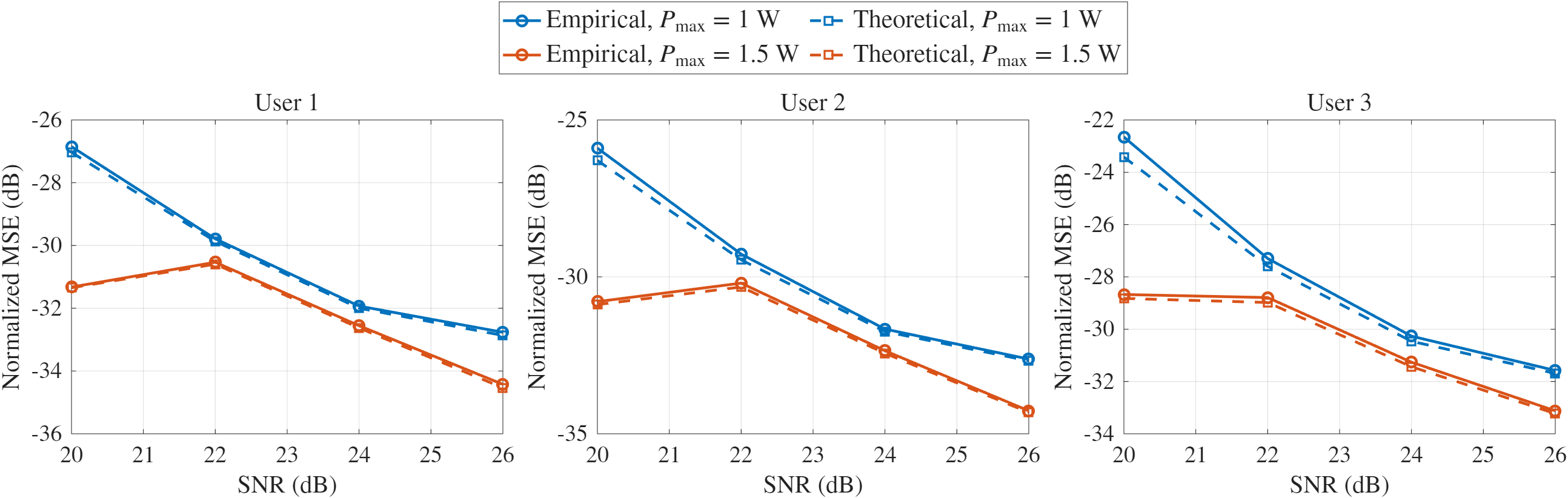}%
    \caption{Channel NMSE versus SNR for $P_{\max}=1$ and $1.5$~W.}
    \label{fig: channel_estmiatin}%
\end{figure*}
\begin{figure}[t]
    \centering
    \subfloat[]{
        \includegraphics[width=0.23\textwidth]{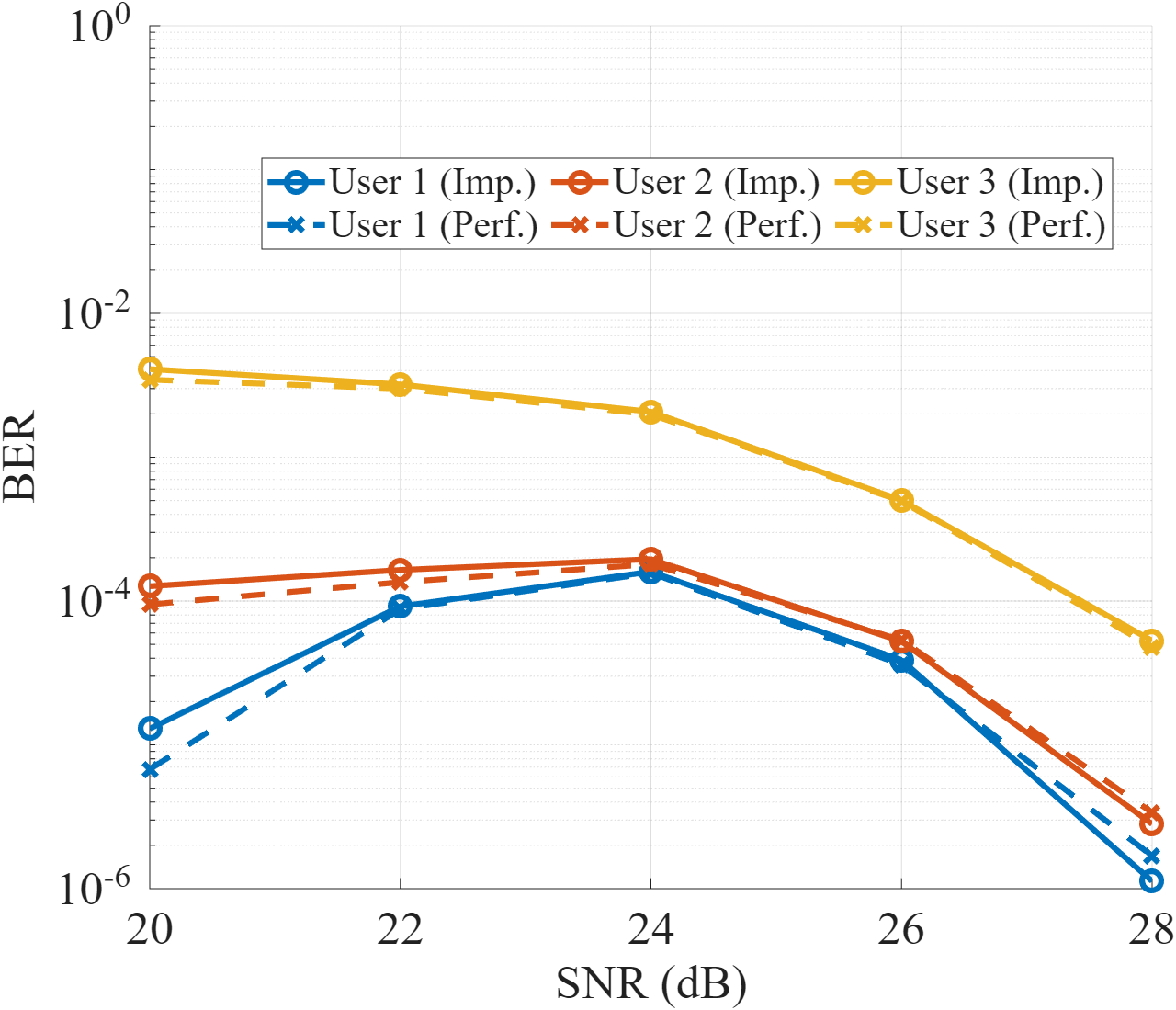}%
        \label{fig: Common Perfect_imperfect BER}%
    }\hfill
    \subfloat[]{
        \includegraphics[width=0.22\textwidth]{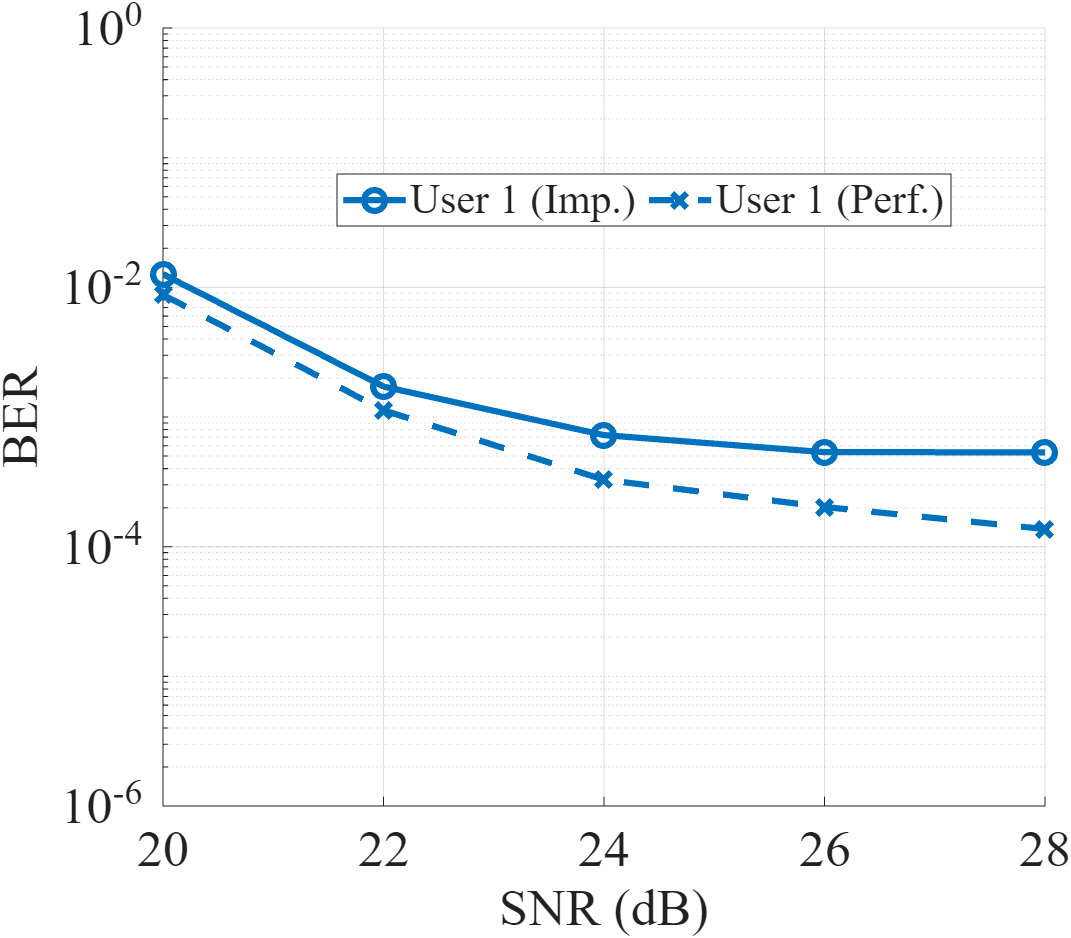}%
        \label{fig: Private Perfect_imperfect BER}%
    }
    \caption{BER  under perfect and imperfect CSI: (a) common data and (b) private message, with $P_{\text{max}}=1$ W and BPSK signaling.}
    \label{perfect_imperfect BER}
\end{figure}
\begin{figure}[t]
    \centering
    \subfloat[]{
        \includegraphics[width=0.23\textwidth]{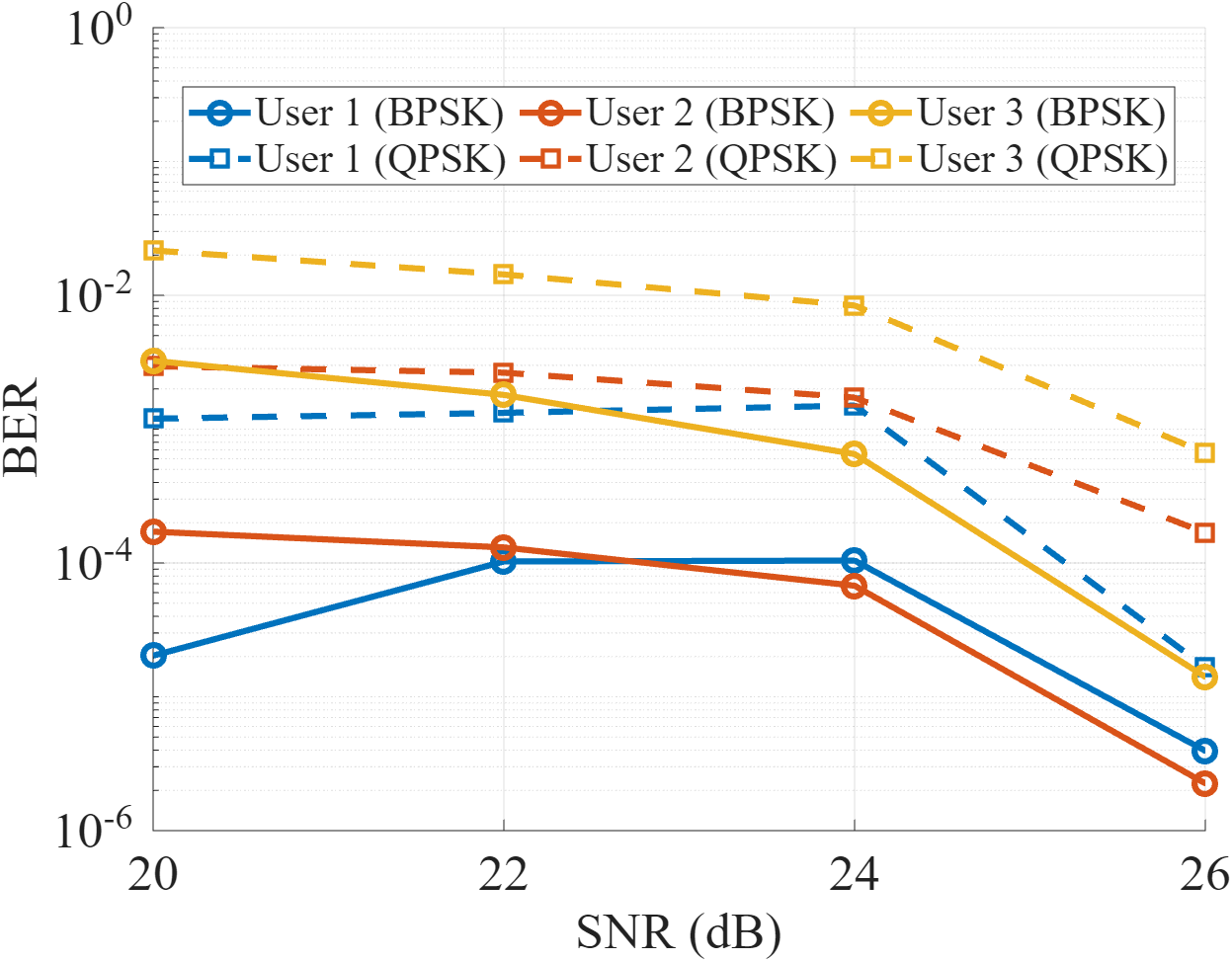}%
        \label{fig: Common BPSK QPSK BER}%
    }\hfill
    \subfloat[]{
        \includegraphics[width=0.22\textwidth]{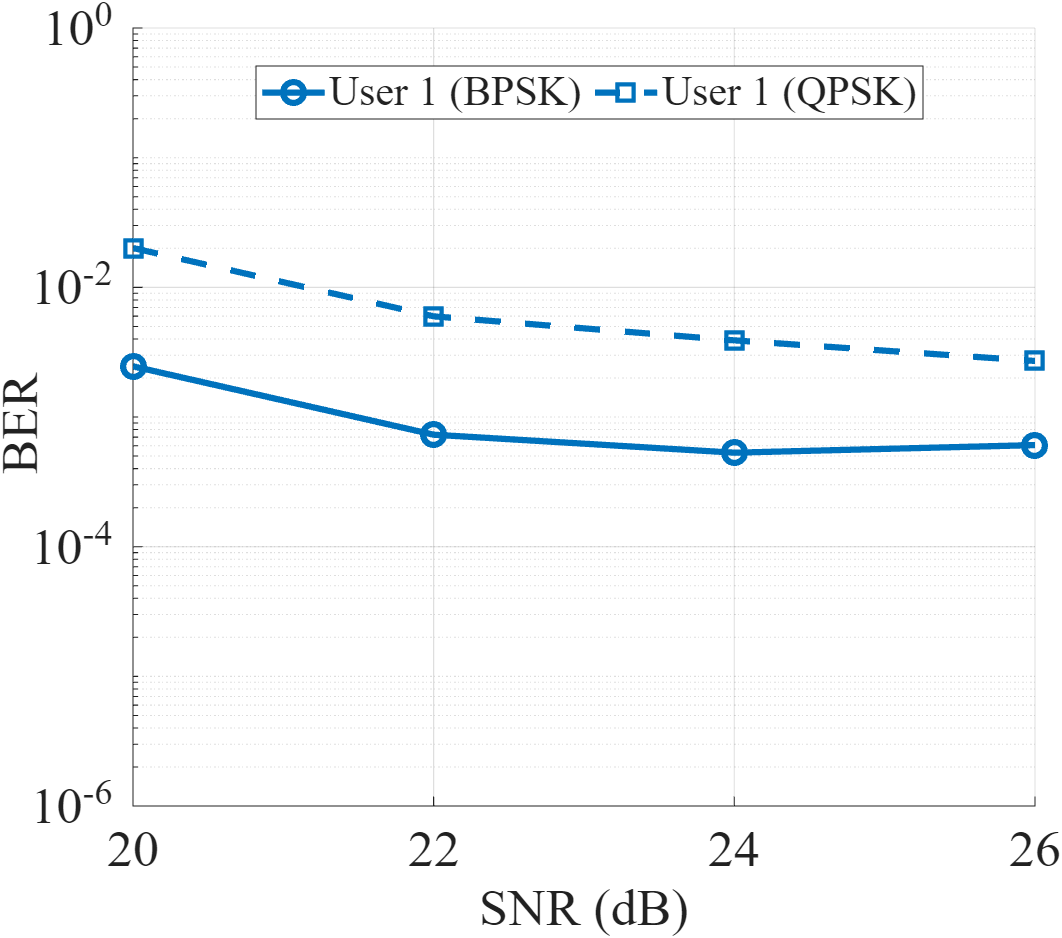}%
        \label{fig: Private  BPSK QPSK BER}%
    }
    \caption{BER under BPSK and QPSK signaling: (a) common data and (b) private message, with $P_{\text{max}}=1.5$ W.}
    \label{QPSK BER}
\end{figure}
\begin{figure}[t]
    \centering
    \subfloat[]{
        \includegraphics[width=0.23\textwidth]{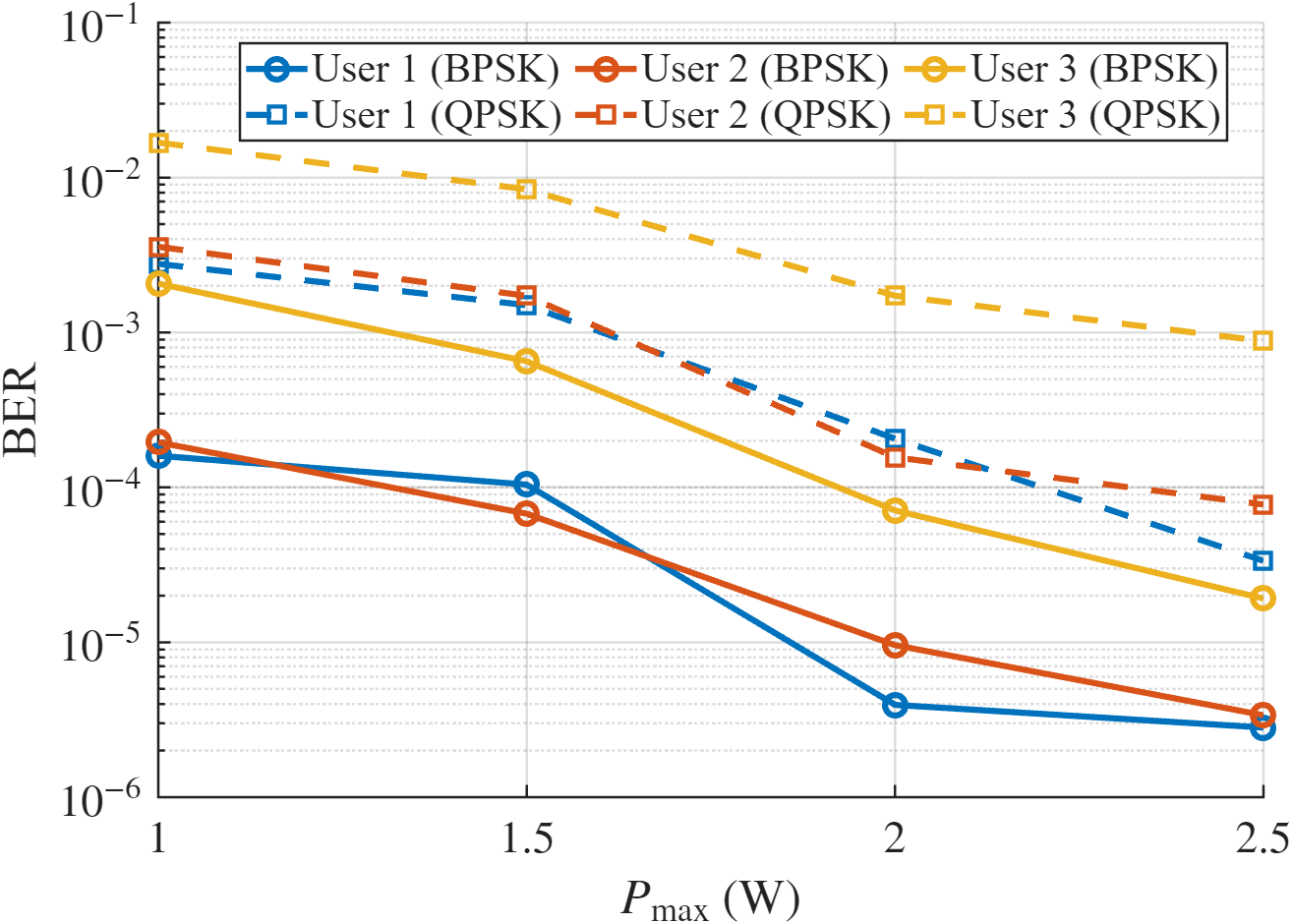}%
        \label{fig: Common powervar BER}%
    }\hfill
    \subfloat[]{
        \includegraphics[width=0.23\textwidth]{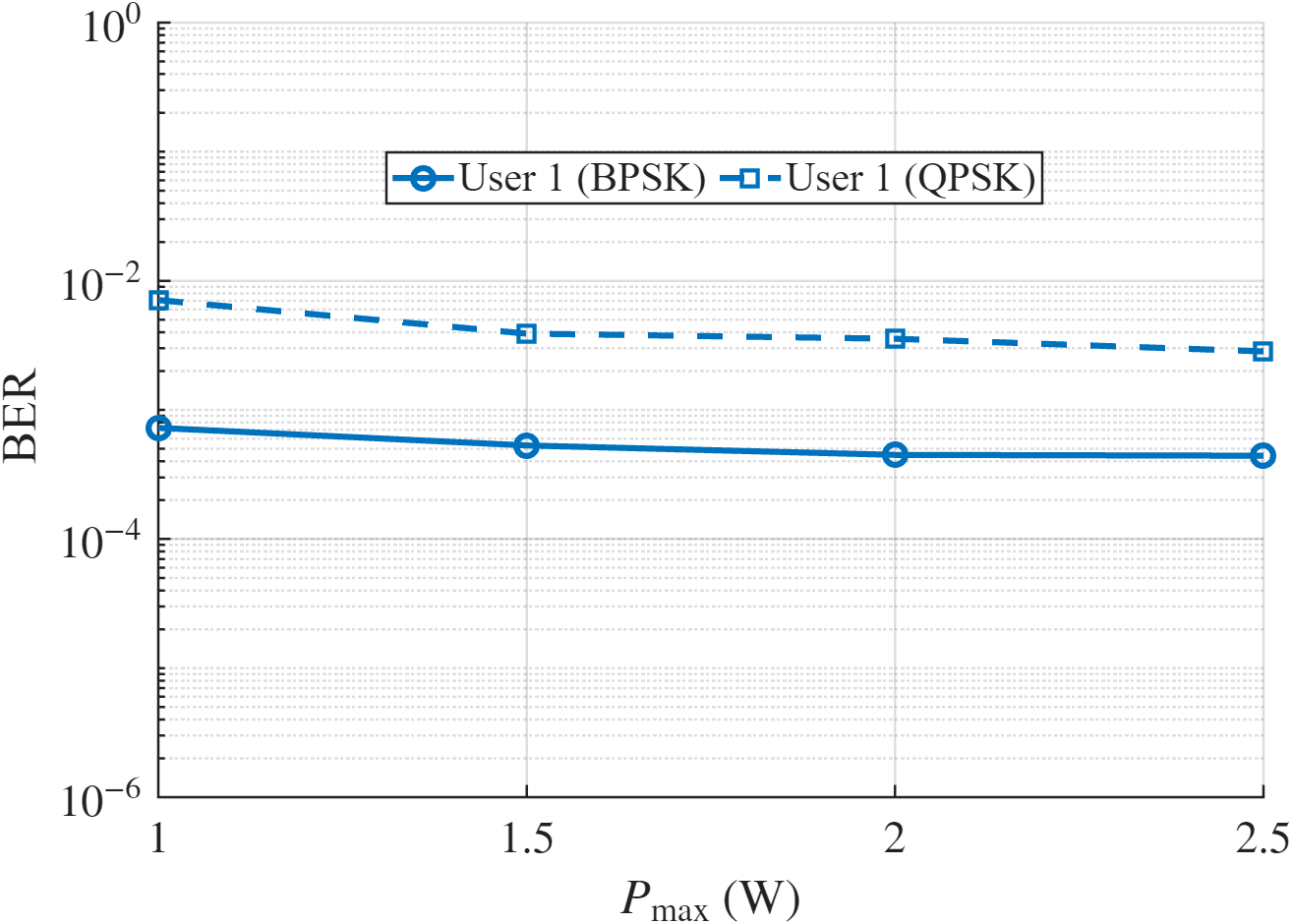}%
        \label{fig: private message powervar BER}%
    }
    \caption{BER  under BPSK and QPSK signaling versus $P_{\text{max}}$: (a) common data and (b) private message, with $\text{SNR}=24\,\text{dB}$.}
    \label{powervar_BER}
\end{figure}
\subsection{Channel Estimation and BER Performance}
Among the GS and GG configurations with favorable sum-SE performance, we focus
on GS for channel-estimation and BER evaluation because it provides a more
challenging setting due to interference from the superimposed private messages.
We use \textbf{L-2D} for resource allocation because it substantially reduces
computation time while achieving sum SE close to that of \textbf{SCA-2D}.
The minimum-rate requirement is set to $R^{\mathrm{th}}=1$~bps/Hz
in these experiments. 
Fig.~\ref{fig: channel_estmiatin} shows the channel normalized MSE (NMSE) of the proposed
LMMSE estimator for all users, where
$\text{NMSE}
=
\frac{\mathbb{E}[\|\pmb{h}_u-\hat{\pmb{h}}_u\|^2]}
{\mathbb{E}[\|\pmb{h}_u\|^2]}$.  
The NMSE remains low for all users, and the close agreement between the
theoretical and empirical results validates Theorem~1. 
We next examine the impact of channel-estimation errors on data detection
in Fig.~\ref{perfect_imperfect BER}.
As shown in Fig.~\ref{fig: Common Perfect_imperfect BER}, the common-data
BER is nearly identical under perfect and imperfect CSI.
For the private message, Fig.~\ref{fig: Private Perfect_imperfect BER}
shows an approximately \(3\)-dB loss for User~1 under imperfect CSI.
Users~2 and~3 are not shown because no private-message power is allocated
to them under the optimized power allocation.
Overall, these results show that the proposed channel estimator limits
the impact of CSI errors on subsequent data detection.

Fig.~\ref{QPSK BER} compares the BER performance of BPSK and QPSK.
As shown in Figs.~\ref{fig: Common BPSK QPSK BER} and
\ref{fig: Private BPSK QPSK BER}, QPSK exhibits higher BER than BPSK
for both the common and private messages.
This is because, for the same average symbol power, QPSK has a smaller
decision margin and is thus more sensitive to residual interference and SIC errors.
Fig.~\ref{powervar_BER} shows the BER performance for different total
power budgets.
As shown in Fig.~\ref{fig: Common powervar BER}, the common-data BER
decreases noticeably with \(P_{\max}\) for all users under both BPSK and
QPSK.
In contrast, Fig.~\ref{fig: private message powervar BER} shows only a
small improvement in the private-message BER of User~1.
This behavior follows from the optimized power allocation: since the
private messages act as interference when decoding the common message,
the optimizer limits their power and uses the additional power primarily
to improve common-message transmission.



\section{Conclusion}
{In this paper, we developed an OTFS-RSMA framework that jointly considers
pilot transmission, channel estimation, message detection, and resource
allocation under imperfect CSI.
Our results demonstrate the interplay between pilot and data transmission,
where the transmission structure affects both channel-estimation accuracy
and resource utilization.
In particular, guard-based common-message transmission provides favorable
sum-SE performance across the considered settings, while the preferred
private-message structure depends on the operating regime.
These results suggest that channel estimation and data transmission should
be considered jointly in the design of OTFS-RSMA systems.
We also showed that the resource-allocation problem admits a simple
allocation structure, enabling near-SCA performance with substantially
reduced computational complexity.
Future work may extend the proposed framework to fractional
Doppler, multi-antenna transmission, and SIC error propagation
caused by erroneous common-message decoding.}

\ifCLASSOPTIONcaptionsoff
  \newpage
\fi


 \vspace{-0.2cm}
\bibliography{bare_jrnl}
\bibliographystyle{IEEEtran}
%

%




\end{document}